\documentclass[12pt, onecolumn,draftcls]{IEEEtran}
\usepackage{algorithm, paralist}
\usepackage{algorithmic}
\usepackage[margin=1in]{geometry}
\usepackage{amsmath,stmaryrd,acronym,amssymb,amsthm,dsfont,graphicx}
\usepackage{tikz,etex}
\usetikzlibrary{arrows,arrows.meta,automata,shapes,calc,intersections}
\graphicspath{{graphics/}}
\newcommand{\ket}[1]{|#1\rangle}
\newcommand{\bra}[1]{\langle #1 |}
\newcommand{\braket}[2]{\langle #1 | #2 \rangle}
\newcommand{\set}[1]{{\left\{#1\right\}}}
\newcommand{\supp}[1]{{\mathrm{supp}\pr{#1}}}

\acrodef{AEP}{Asymptotic Equipartition Property}
\acrodef{AoA}{Angle of Arrival}
\acrodef{AWGN}{Additive White Gaussian Noise}
\acrodef{BER}{Bit-Error-Rate}
\acrodef{BEC}{Binary Erasure Channel}
\acrodef{BPSK}{Binary Phase-Shift Keying}
\acrodef{BSC}{Binary Symmetric Channel}
\acrodef{CDF}[CDF]{Cumulative Distribution Function}
\acrodef{CLT}[CLT]{Central Limit Theorem}
\acrodef{CSI}[CSI]{Channel State Information}
\acrodef{DMC}[DMC]{Discrete Memoryless Channel}
\acrodef{DMS}[DMS]{Discrete Memoryless Source}
\acrodef{iid}[i.i.d.]{independent and identically distributed}
\acrodef{lhs}[l.h.s.]{left-hand-side}
\acrodef{rhs}[r.h.s.]{right-hand-side}
\acrodef{LPD}[LPD]{Low Probability of Detection}
\acrodef{LDPC}[LDPC]{Low-Density Parity-Check}
\acrodef{MAC}[MAC]{multiple-access channel}
\acrodef{MIMO}[MIMO]{Multiple-Input Multiple-Output}
\acrodef{MISO}{Multiple-Input Single-Output}
\acrodef{MLC}[MLC]{MultiLevel Coding}
\acrodef{PDF}[PDF]{Probability Distribution Function}
\acrodef{PMF}[PMF]{Probability Mass Function}
\acrodef{PPM}[PPM]{Pulse Position Modulation}
\acrodef{PSD}{Power Spectral Density}
\acrodef{QKD}{Quantum Key Distribution}
\acrodef{QPSK}{Quadrature Phase-Shift Keying}
\acrodef{SIMO}{Single-Input Multiple-Output}
\acrodef{SNR}{Signal-to-Noise Ratio}
\acrodef{wrt}[w.r.t.]{with respect to}
\acrodef{WSS}{Wide Sense Stationary}

\DeclareMathAlphabet{\eurm}{U}{eur}{m}{n}
\DeclareMathAlphabet{\mathbsf}{OT1}{cmss}{bx}{n}
\DeclareMathAlphabet{\mathssf}{OT1}{cmss}{m}{sl}
\DeclareMathAlphabet{\mathcsf}{OT1}{cmss}{sbc}{n}



\DeclareSymbolFont{bsfletters}{OT1}{cmss}{bx}{n}  
\DeclareSymbolFont{ssfletters}{OT1}{cmss}{m}{n}
\DeclareMathSymbol{\bsfGamma}{0}{bsfletters}{'000}
\DeclareMathSymbol{\ssfGamma}{0}{ssfletters}{'000}
\DeclareMathSymbol{\bsfDelta}{0}{bsfletters}{'001}
\DeclareMathSymbol{\ssfDelta}{0}{ssfletters}{'001}
\DeclareMathSymbol{\bsfTheta}{0}{bsfletters}{'002}
\DeclareMathSymbol{\ssfTheta}{0}{ssfletters}{'002}
\DeclareMathSymbol{\bsfLambda}{0}{bsfletters}{'003}
\DeclareMathSymbol{\ssfLambda}{0}{ssfletters}{'003}
\DeclareMathSymbol{\bsfXi}{0}{bsfletters}{'004}
\DeclareMathSymbol{\ssfXi}{0}{ssfletters}{'004}
\DeclareMathSymbol{\bsfPi}{0}{bsfletters}{'005}
\DeclareMathSymbol{\ssfPi}{0}{ssfletters}{'005}
\DeclareMathSymbol{\bsfSigma}{0}{bsfletters}{'006}
\DeclareMathSymbol{\ssfSigma}{0}{ssfletters}{'006}
\DeclareMathSymbol{\bsfUpsilon}{0}{bsfletters}{'007}
\DeclareMathSymbol{\ssfUpsilon}{0}{ssfletters}{'007}
\DeclareMathSymbol{\bsfPhi}{0}{bsfletters}{'010}
\DeclareMathSymbol{\ssfPhi}{0}{ssfletters}{'010}
\DeclareMathSymbol{\bsfPsi}{0}{bsfletters}{'011}
\DeclareMathSymbol{\ssfPsi}{0}{ssfletters}{'011}
\DeclareMathSymbol{\bsfOmega}{0}{bsfletters}{'012}
\DeclareMathSymbol{\ssfOmega}{0}{ssfletters}{'012}























\newcommand{\calA}{{\mathcal{A}}}
\newcommand{\calB}{{\mathcal{B}}}

\newcommand{\calD}{{\mathcal{D}}}
\newcommand{\calE}{{\mathcal{E}}}

\newcommand{\calH}{{\mathcal{H}}}

\newcommand{\calL}{{\mathcal{L}}}
\newcommand{\calM}{{\mathcal{M}}}
\newcommand{\calN}{{\mathcal{N}}}
\newcommand{\calO}{{\mathcal{O}}}
\newcommand{\calP}{{\mathcal{P}}}
\newcommand{\calQ}{{\mathcal{Q}}}

\newcommand{\calT}{{\mathcal{T}}}
\newcommand{\calS}{{\mathcal{S}}}
\newcommand{\calU}{{\mathcal{U}}}
\newcommand{\calV}{{\mathcal{V}}}
\newcommand{\calX}{{\mathcal{X}}}
\newcommand{\calY}{{\mathcal{Y}}}

\newcommand{\bfB}{{\mathbf{B}}}
\newcommand{\bfU}{{\mathbf{U}}}
\newcommand{\bfV}{{\mathbf{V}}}
\newcommand{\bfv}{{\mathbf{v}}}
\newcommand{\bfu}{{\mathbf{u}}}
\newcommand{\bfy}{{\mathbf{y}}}
\newcommand{\bfW}{{\mathbf{W}}}
\newcommand{\argmin}{{\textnormal{argmin}}}

       
\renewcommand{\P}[2][]{{\mathbb{P}_{#1}}{\left(#2\right)}}
       
\newcommand{\D}[2]{{{\mathbb{D}}\!\left({#1\Vert#2}\right)}}





\newcommand{\Hb}[1]{{\mathbb{H}_b}\left(#1\right)}




\newcommand{\card}[1]{\ensuremath{\left|{#1}\right|}}           
\newcommand{\abs}[1]{\ensuremath{\left|#1\right|}}              
\newcommand{\norm}[2][]{\ensuremath{{\left\Vert{#2}\right\Vert}_{#1}}}   
\newcommand{\eqdef}{\ensuremath{\triangleq}}                    
\newcommand{\intseq}[2]{\ensuremath{\llbracket{#1},{#2}\rrbracket}}  
\newcommand{\indic}[1]{\ensuremath{\mathds{1}\!\left\{#1\right\}}}

\renewcommand{\leq}{\leqslant}
\renewcommand{\geq}{\geqslant}

\newcommand{\tr}[1]{\ensuremath{\text{\textnormal{tr}}\left(#1\right)}}  
\renewcommand{\ker}[1]{\ensuremath{\text{\textnormal{Ker}}\left(#1\right)}}  






















\newcommand{\proddist}{%
  \mathchoice{\raisebox{1pt}{$\displaystyle\otimes$}}
             {\raisebox{1pt}{$\otimes$}}
             {\raisebox{0.5pt}{\scalebox{0.7}{$\scriptstyle\otimes$}}}
             {\raisebox{0.4pt}{\scalebox{0.6}{$\scriptscriptstyle\otimes$}}}}
\newcommand{\pn}{{\proddist n}}


\usepackage[textwidth=0.68in,textsize=footnotesize,disable]{todonotes}
\presetkeys{todonotes}{color=redArcom, linecolor=redArcom,bordercolor=redArcom}{}

\acrodef{ROC}[ROC]{Receiver Operation Characteristic}
\acrodef{PPM}[PPM]{Pulse-Position Modulation}

\presetkeys{todonotes}{color=redArcom, linecolor=redArcom,bordercolor=redArcom}{}

\acrodef{AVC}{arbitrary varying channel}


\usepackage{xcolor}

\usepackage{stmaryrd,acronym,amsthm,dsfont,graphicx,enumitem}

\newtheorem{theorem}{Theorem}
\newtheorem{lemma}{Lemma}
\newtheorem{definition}{Definition}
\newtheorem{proposition}{Proposition}
\newtheorem{remark}{Remark}

\newcommand{\pr}[1]{\left({#1}\right)}

\newcommand{\id}{{\mathrm{id}}}
\newcommand{\bigO}[2][]{\calO_{#1}\pr{#2}}
\newcommand{\one}{\mathbf{1}}
\newcommand{\R}{\mathbb{R}}

\newcommand{\ketbra}[2]{{\ket{#1}\bra{#2}}}
\newcommand{\kb}[1]{{\ket{#1}\bra{#1}}}
\newcommand{\ptr}[2]{\mathrm{tr}_{#1}\pr{#2}}

\newcommand{\nrm}[1]{{\norm{#1}}}
\acrodef{POVM}{Positive Operator Valued Measurement}

\acrodef{ROC}[ROC]{Receiver Operation Characteristic}

\acrodef{OOK}[OOK]{On-Off Keying}

\newcommand{\Dcc}[3]{\mathbb{C}\pr{#1\|#2|#3}}

\newcommand{\qy}[1]{\rho_{B|\theta}^{#1}}
\newcommand{\qz}[1]{\rho_{W|\theta}^{#1}}

\newcommand{\qpy}[1]{\rho_{\bfB|\theta}^{#1}}
\newcommand{\qpz}[1]{\rho_{\bfW|\theta}^{#1}}

\newcommand{\qyp}[1]{\rho_{B|\theta'}^{#1}}

\newcommand{\qpyp}[1]{\rho_{\bfB|\theta'}^{#1}}

\title{On Covert Quantum Sensing and the Benefits of Entanglement}

\author{Mehrdad Tahmasbi and Matthieu R. Bloch}
\begin{document}
\maketitle

\begin{abstract}
  Motivated by applications to covert quantum radar, we analyze a  covert quantum sensing problem, in which a legitimate user aims at estimating an unknown parameter taking finitely many values by probing a quantum channel while remaining undetectable from an adversary receiving the probing signals through another quantum channel. When channels are classical-quantum, we characterize the optimal error exponent under a covertness constraint for sensing strategies in which probing signals do not depend on past observations. When the legitimate user's channel is a unitary depending on the unknown parameter, we provide achievability and converse results that show how one can significantly improve covertness using an entangled input state. 
\end{abstract}

\section{Introduction}
\label{sec:introduction}
While much of the information-theoretic security literature focuses on ensuring secrecy and privacy, in the sense of preventing or minimizing  the information content leaked by signals, there have been recent efforts geared at understanding the information-theoretic limits of covertness, defined as the ability to avoid detection by hiding the mere presence of signals themselves. In particular, such information-theoretic limits have been successfully characterized in the context of covert communication and covert sensing. Covert communications describe situations in which two legitimate parties attempt to communicate reliably over a noisy channel while avoiding detection by a third party. Covert communications are governed by a square-root law~\cite{Bash2013}, which limits the number of bits that can be reliably and covertly transmitted to the square root of the block length, and the channel-dependent pre-constant that governs the scaling is known for classical discrete-memoryless channels~\cite{Bloch2015b,Wang2016b,Tahmasbi2017}, Gaussian channels~\cite{Wang2016b}, classical-quantum channels~\cite{Sheikholeslami2016,Wang2016c}, and lossy bosonic channels~\cite{Gagatsos2020}. Covert sensing, in contrast, refers to scenarios in which the estimation of parameters of interest requires the use of probing signals that emit energy and are therefore detectable; if estimation could be achieved through purely passive sensing, covertness would automatically be guaranteed. Covert sensing is also governed by a form of square-root law. Specifically,~\cite{Bash2017,Gagatsos2019} have considered the problem of estimating an unknown phase over a bosonic channel while keeping the sensing undetectable by a passive quantum adversary. This operation is made possible by the presence of thermal noise, which allows one to hide the useful sensing signal in the background thermal noise and results in a mean-square phase estimation error scaling as $\bigO{\frac{1}{\sqrt{n}}}$ if $n$ is the number of modes.~\cite{Goeckel2017} has investigated a slightly different model in which the objective is to covertly estimate the impulse response of a linear system. One of the main results obtained is that the bandwidth of sensing signals must scale linearly with the time duration of these signals. Potential applications of covert sensing include covert radar and covert pilot estimation in wireless communications. 

The present work studies covert sensing by drawing on connections with \emph{active hypothesis testing}~\cite{Naghshvar2013a,Naghshvar2013}, also known as \emph{controlled sensing}~\cite{Nitinawarat2013}. Active hypothesis testing differs from traditional hypothesis testing~\cite{Blahut1973}, by considering the possibility of changing the kernel through which unknown parameters are observed, which leads to estimation strategies that are potentially faster or more accurate. Recent studies of active hypothesis testing have built upon the pioneering work of Chernoff~\cite{Chernoff1959} on sequential design of experiments to provide new insights into the problem, including the benefits of sequentiality and adaptivity~\cite{Naghshvar2013a,Nitinawarat2013,Hayashi2009a}, the role of extrinsic Jensen-Shannon Divergence as an information utility metric~\cite{Naghshvar2012}, the unavoidable trade-off between reliability and resolution of estimation~\cite{Naghshvar2013}, and the identification of situations when pure (non-randomized) policies are optimal~\cite{Nitinawarat2013}. Examples of recent applications of active hypothesis testing~ include radar~\cite{Franceschetti2017} and millimeter-wave beam alignment~\cite{Chiu2019}. Active hypothesis testing offers a natural framework for studying covert sensing since covertness effectively requires one to use different observation kernels to hide the presence of probing signals.

The problem of quantum state or channel discrimination without covertness constraint has been intensively studied, see, e.g.,~\cite[Chapter 3]{Watrous_2018},~\cite{Pirandola2019}. The optimal error exponent of discrimination of finitely many quantum states has been characterized~\cite{Nussbaum_Szkola_2009, Li_2016} and resembles the optimal error exponent of \emph{classical} states discrimination; this exponent is known as the multiple quantum Chernoff distance. For \emph{quantum channel} discrimination, as the probing signal can be any quantum state and could be arbitrarily entangled with the environment and previously received signals, several intriguing phenomena, specific to the quantum world, could happen. For example, the legitimate user can substantially decrease the probability of estimation error by keeping its environment entangled with the probing signals~\cite[Example 3.36]{Watrous_2018} or quantum channels that cannot be perfectly distinguished with a single probing can be distinguished with multiple probing with zero probability of error~\cite{Acin_2001}.

In a previous conference paper~\cite{Tahmasbi2020b}, we revisited the idea of covert sensing put forward in~\cite{Bash2017,Gagatsos2019} from the perspective of active hypothesis testing in a classical setting, which we called \emph{active covert sensing}. Therein, we have characterized the exponent of the probability of detection error subject to covertness constraints for non-adaptive non-sequential strategies and illustrated the benefits of adaptive non-sequential strategies. In the present work, we expand upon these results in the quantum setting, in which the legitimate parties attempt to discriminate quantum channels subject to a covertness constraint. Some of the results developed hereafter supersede those in~\cite{Tahmasbi2020b} but, unlike~\cite{Tahmasbi2020b}, we do not consider the adaptation of the probing signals with respect to the previous outputs of the quantum channel; instead, we explore the potential benefits of using entanglement for covert sensing. Specifically, we offer the following two contributions: \begin{inparaenum}[i)]
\item when the legitimate user's probing signals are classical, but the received state by both the legitimate user and the warden are quantum, we characterize the exact detection error exponent of non-adaptive non-sequential strategies subject to a covertness constraint;
\item when the legitimate user's channel is a unitary depending on the unknown parameter, we show that the legitimate user can estimate the unknown parameter \emph{with zero error} while satisfying a stronger notion of covertness compared to what we could achieve over classical-quantum (cq)-channels. We also prove a converse result showing that the asymptotic scaling of the covertness in our achievability result is optimal. 
\end{inparaenum}

The remainder of the paper is organized as follows. We introduce our notation in Section~\ref{sec:notation} and formalize the problem in Section~\ref{sec:problem-formulation}. We provide our main results for cq-channels and unitary channels in Section~\ref{sec:main-res} and prove them in Section~\ref{sec:proofs}. We defer the most technical parts of the proofs to the appendices to streamline the presentation.

\section{Notation}
\label{sec:notation}
We denote a vector of length $n$ (e.g., $(x_1, \cdots, x_n)$ in $\calX^n$)  by a boldface letter (e.g. $\mathbf{x}$).  We let $T_{\mathbf{x}}$ denote the type of the vector $\mathbf{x}$, which is a \ac{PMF} over $\calX$ defined by
\begin{align}
T_{\mathbf{x}}(a) \eqdef \frac{\card{\set{i\in\intseq{1}{n}: x_i = a  }}}{n},
\end{align}
where $\intseq{m}{n} \eqdef \set{i\in \mathbb{Z}: m\leq i\leq n}$. We define $\calP_n(\calX) \eqdef \set{T_{\mathbf{x}}: \mathbf{x}\in \calX^n}$ and $\calT_Q \eqdef \set{\mathbf{x}: T_{\mathbf{x}}  = Q}$ for $Q\in \calP_n(\calX)$. $P_X$ denotes a probability distribution over the set $\calX$ and $P_X \otimes P_Y$ is the product distribution over $\calX \times \calY$ induced by two marginals $P_X$ and $P_Y$. $P_X^\pn$ also denotes the $n$-fold product distribution of $P_X$ over $\calX^n$.  We define $\Hb{x} \eqdef -x\log x - (1-x)\log(1-x)$ for $x\in[0, 1]$. Let $\indic{\cdot}$ denote the indicator function. We use standard asymptotic notation $\bigO{\cdot}$, $o(\cdot)$, $\omega(\cdot)$, and $\Theta(\cdot)$. To emphasize that the constant hidden in $\bigO{\cdot}$ could only depend on a parameter $\theta$, we write $\bigO[\theta]{\cdot}$.

A quantum system $A$ is described by a finite-dimensional Hilbert space, which we also denote by $A$ with a slight abuse of notation. Let $\dim A$ be the dimension of $A$ and $\one_A$ be the identity map on $A$. We denote the tensor product of $A$ and $B$ by $A\otimes B$ or $AB$. $\calL(A)$ denotes the set of all  linear operators from $A$ to $A$ and $\calD(A)$ denotes the set of all density operators acting on $A$, which are the possible states of the quantum system $A$. Given two density operators $\rho_A \in \calD(A)$ and $\rho_B\in \calD(B)$, we denote the product state on $AB$ by $\rho_A \otimes \rho_B$. We also define $\rho_A^\pn$ is also $n$-times tensor product of $\rho_A$. A pure state is of the form $\kb{\phi}_A$ for a unit vector $\ket{\phi}_A\in A$. We use $\phi_A$ to denote $\kb{\phi}_A$ when there is no confusion. For $X\in \calL(A)$, the trace norm of $X$ is ${\norm{X}}_1 \eqdef \tr{\smash{\sqrt{X^\dagger X}}}$, and $\nu(X)$ denotes the number of \emph{distinct} eigenvalues of $X$. We also define the support of $X\in \calL(\calH_A)$ as the subspace orthogonal to $\ker X$, which we denote by $\text{supp}(X)$. We denote the adjoint of $X$ by $X^\dagger$. When $X$ is Hermitian, i.e., $X = X^\dagger$, $\lambda_{\min}(X)$ denotes the minimum eigenvalue of $X$. The fidelity between two density operators $\rho$ and $\sigma$ is defined as $F(\rho, \sigma) \eqdef {\norm{\smash{\sqrt{\rho} \sqrt{\sigma}}}}_1^2$. A quantum channel $\calN_{A\to B}$ is a linear trace-preserving completely positive map from $\calL(A)$ to $\calL(B)$. Given two quantum channels $\calN$ and $\calM$, we denote their tensor product by $\calN \otimes \calM$. Let $\id_A$ be the identity channel on $\calL(A)$. For two states $\rho$ and $\sigma$ with $\text{supp}(\rho) \subset \text{supp}(\sigma) $, we define
\begin{align}
 \chi_2\pr{\rho\|\sigma} &\eqdef \tr{\rho^2\sigma^{-1}}-1, \\
 \D{\rho}{\sigma} &\eqdef \tr{\rho \pr{\log \rho - \log \sigma}}.
\end{align}
Additionally, given the spectral decomposition of a state $\sigma = \sum_i \lambda_i P_i$, we define
\begin{align}
    \eta(\rho\|\sigma) = \sum_{i\neq j} \frac{\log \lambda_i - \log \lambda_j}{\lambda_i - \lambda_j} \tr{(\rho - \sigma) P_i (\rho-\sigma) P_j} + \sum_i \frac{1}{\lambda_i} \tr{(\rho - \sigma) P_i (\rho-\sigma) P_i}.
\end{align}
Finally, we  use  standard notions from differential geometry such as tangent space and derivative of a smooth functions.
\section{Problem formulation}
\label{sec:problem-formulation}

\begin{figure}[b]
  \centering
  \scalebox{0.6}{\input{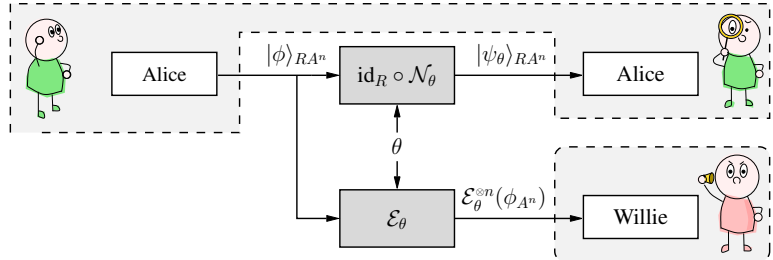}}
  \caption{Model for quantum covert sensing.}
  \label{fig:system-model}
\end{figure}

Let $A$, $B$, and $W$ be quantum systems and $\Theta$ be a finite set of parameters. As illustrated in Fig.~\ref{fig:system-model}, let $\set{\calN_\theta : \calL(A) \to \calL(B) }_{\theta\in \Theta}$ and $\set{\calE_\theta : \calL(A) \to \calL(W) }_{\theta\in \Theta}$  be two families of quantum channels. Alice's estimation strategy consists of the following. Alice prepares a possibly entangled state $\ket{\phi}_{RA^n}$, where $R$ is a reference system with $\dim R = \dim A^n$, and transmits the subsystem $A^n$ in the quantum state $\phi_{A^n}$. Alice then receives $\psi_{\theta, RA^n} \eqdef (\id_R \otimes \calN_\theta^\pn) \pr{\kb{\phi}_{RA^n}}$, on which Alice performs a POVM $\set{\Gamma_\theta}_{\theta \in \Theta}$ to estimate the unknown parameter $\theta$. We measure the estimation error through
\begin{align}
\label{eq:def-est-err}
    \max_{\theta\in \Theta} \pr{1 - \tr{\psi_{\theta, RA^n} \Gamma_\theta}}.
\end{align}
Let $\calS(n, \epsilon)$ denote the set of all states $\ket{\phi}_{RA^n}$ such that the estimation error is not greater than $\epsilon$ for \emph{some} POVM. The strategy is called non-sequential because the parameter $n$ is fixed, and non-adaptive because the probing signals are not adapted to past observations.

Willie observes what Alice transmits through the memory-less channel $\calE_\theta$ when the parameter is $\theta$, i.e., Willie receives $\calE^{\pn}_\theta(\phi_{A^n})$. One of the input vectors is denoted by $\ket{0} \in A$ and corresponds to Alice being ``inactive,'' i.e., Willie expects Alice to transmit $\ket{0}^\pn$ when no estimation strategy is run. This allows us  measure the inability of Willie to detect probing signals through the covertness metric
\begin{align}
\label{eq:def-cov}
    \max_{\theta\in \Theta}\D{\calE_\theta^{\pn}(\phi_{A^n})}{\calE_\theta^\pn\pr{\kb{0}^\pn}}.
\end{align}
We refer the reader to~\cite{Bloch2015b,Tahmasbi2017} for a discussion on how upper-bounding \eqref{eq:def-cov} yields a bound on the probability of error of any strategy employed by Willie to detect the presence of an estimation strategy. 

We finally define quantities
\begin{align}
C(n, \epsilon) &\eqdef \min_{\ket{\phi}_{RA^n} \in \calS(n, \epsilon) } \max_{\theta\in \Theta}\D{\calE_\theta^{\pn}(\phi_{A^n})}{\calE_\theta^\pn\pr{\kb{0}^\pn}},\\
E(n, \delta) &\eqdef \inf\set{\epsilon \in ]0, 1[: C(n, \epsilon) \leq \delta},
\end{align}
which will be useful to express the fundamental limits of Alice's performance.
\begin{remark}
To ensure that our model is physically realizable, $\calN_\theta$ and $\calE_\theta$ should be consistent for all $\theta$. That is, there should exist  quantum systems $B'_\theta$, $W'_\theta$, and $C_\theta$, isomorphic isometries $V_{\theta, B}: B'_\theta C_\theta \to B$ and $V_{\theta,W}: W'_{\theta}C_{\theta} \to W$ (i.e., $B'_\theta C_\theta \cong B$ and $W'_\theta C_\theta \cong W$) and  a quantum channel $\calM_\theta: \calL(A) \to \calL(B'W'C)$ such that 
\begin{align}
    \calN_\theta(X) &= V_{\theta, B} \ptr{W'_\theta}{\calM_\theta(X)} V_{\theta, B}^\dagger~\quad \forall X\in \calL(A)\\
    \calE_\theta(X) &= V_{\theta, W} \ptr{B'_\theta}{\calM_{\theta}(X)} V_{\theta, W}^\dagger\quad \forall X \in \calL(A).
\end{align}
Because some parts of Alice's and Willie's output systems  could be in common, the order in which Alice and Willie observe their outputs matter. We assume that Willie first observes his outputs and, should he decides to disturb the systems, Alice is notified through another means of communication.
\end{remark}
\section{Main results}
\label{sec:main-res}
\subsection{Cq-channels}
\label{sec:main-res-cq}
We first provide the full asymptotic characterization of Alice's optimal performance for covert sensing over cq-channels. Although the cq channels we consider here have finite dimension, cq channels are good models for those channels that arise in quantum optics, such as bosonic channels in which the input is a classical parameter of the transmitted states, see, e.g.,~\cite{Bash2017}. In particular, we assume that both $\calN_\theta$ and $\calE_\theta$ are cq for all $\theta$. That is, there exist an  orthonormal basis  $\set{\ket{a_u}_A: u \in \calU}$  for $A$ and two sets of quantum states $\set{\rho_{B|\theta}^u}_{u \in \calU, \theta\in \Theta}$ and $\set{\rho_{W|\theta}^u}_{u \in\calU, \theta\in \Theta}$   such that 
\begin{align}
    \calN_\theta\pr{\ketbra{a_{u}}{a_{u'}}} &= \indic{u = u'}\rho_{B|\theta}^u \quad \forall u, u' \in \calU\\
    \calE_\theta\pr{\ketbra{a_{u}}{a_{u'}}} &= \indic{u = u'}\rho_{W|\theta}^u \quad \forall u, u' \in \calU.
\end{align}
Note that $\calE_\theta$ and $\calN_\theta$ are uniquely characterized by the linearity of quantum channels.  We also assume that $0\in \calU$ and that $\ket{a_0}$ is the innocent state $\ket{0}$.
We also impose some mild restrictions on the cq-channels that are required to make the problem meaningful.
\begin{enumerate}
    \item $\widetilde{\Theta} \eqdef \set{\theta \in \Theta: \exists \theta' \in \Theta\setminus\set{\theta}: \rho_{B|\theta}^0 = \rho_{B|\theta'}^0  } \neq \emptyset$. This ensures that Alice cannot distinguish all parameters by sending only $0$, which would result in perfect covertness. 
    \item There exists $\theta \in \Theta$ such that   no distribution $P$ over $\calU\setminus \set{0}$ is such that $\sum_u P(u) \rho_{W|\theta}^u  = \rho_{W|\theta}^0$. 
    This ensures that Alice cannot simulate sending $0$ by a random selection of other inputs.
    \item For all $\theta\in \Theta$ and for all $u \in \calU$ $\textnormal{supp}\pr{\rho_{W|\theta}^u} \subset \textnormal{supp}\pr{\rho_{W|\theta}^0}$. This ensures that Willie cannot detect the estimation with non-vanishing probability when a state with support not included in $\textnormal{supp}\pr{\rho_{W|\theta}^0}$ is transmitted.
    \end{enumerate}
We introduce the notion of conditional Chernoff information to state our main result.
\begin{definition}
Let $\theta$ and $\theta'$ be two parameters in $\Theta$ and $P$ be a distribution over $\calU$. The conditional Chernoff information  is
\begin{align}
\label{eq:chernoff-inf-def}
\Dcc{\theta}{\theta'}{P} \eqdef \sup_{s\in [0, 1]} - \sum_u P(u) \log\pr{\tr{\pr{\rho_{B|\theta}^u}^s \pr{\rho_{B|\theta'}^{u}}^{1-s}}}.
\end{align}
\end{definition}
\begin{theorem}
\label{th:n-adap-disc}
Under the assumption on the cq-channel discussed above, we have $-\log E(n, \delta_n) = \Theta(\sqrt{n\delta_n})$ for any sequence $\set{\delta_n}_{n\geq 1} = \bigO{1} \cap \omega\pr{\frac{\log n}{n}}$, and in particular,
\begin{align}
\label{eq:non-adaptive}
    \lim_{n\to \infty} -\frac{\log E(n, \delta_n)}{\sqrt{n\delta_n}}  = \sup_{P} \frac{\sqrt{2}\min_{\theta\neq \theta': \rho_{B|\theta}^u = \rho_{B|\theta}^0}\Dcc{\theta}{\theta'}{P}}{\sqrt{ \max_{\theta } \eta(\sum_u P(u)\rho_{W|\theta}^u\| \rho_{W|\theta}^0)}},
\end{align}
where the supremum is taken over all probability distributions $P$ over $\calU \setminus \set{0}$.
\end{theorem}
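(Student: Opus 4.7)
The plan is to exploit the cq structure of $\calN_\theta$ and $\calE_\theta$ to reduce the problem to a classical coding problem over $\calU^n$, and then to combine a second-order Taylor expansion of the quantum KL divergence around $\rho_{W|\theta}^0$ with the multi-hypothesis quantum Chernoff bound of~\cite{Nussbaum_Szkola_2009,Li_2016}. Since both $\calN_\theta^\pn$ and $\calE_\theta^\pn$ dephase their input in the product basis $\set{\bigotimes_{i=1}^n \ket{a_{u_i}}}_{\mathbf{u}\in\calU^n}$, the state on Willie's side reduces to $\sum_\mathbf{u}Q_n(\mathbf{u})\bigotimes_{i=1}^n \rho_{W|\theta}^{u_i}$ with $Q_n(\mathbf{u})=\tr{\pr{\bigotimes_{i=1}^n\kb{a_{u_i}}}\phi_{A^n}}$, and a measure-then-discriminate argument on the reference $R$ shows that no entanglement advantage arises for Alice's discrimination either, so both \eqref{eq:def-est-err} and \eqref{eq:def-cov} depend only on the classical distribution $Q_n$ on $\calU^n$.

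For achievability, I would fix a distribution $P$ on $\calU\setminus\set{0}$ and take $Q_n$ to be the $n$-fold product of $Q(u)\eqdef(1-\alpha_n)\indic{u=0}+\alpha_n P(u)$. The functional $\rho\mapsto\eta\pr{\rho\|\rho_{W|\theta}^0}$ is quadratic in $\rho-\rho_{W|\theta}^0$ and coincides with twice the second derivative of $\D{\cdot}{\rho_{W|\theta}^0}$ at $\rho_{W|\theta}^0$, so a Taylor expansion yields
\begin{align}
\D{\textstyle\sum_u Q(u)\rho_{W|\theta}^u}{\rho_{W|\theta}^0}=\frac{\alpha_n^2}{2}\eta\pr{\textstyle\sum_u P(u)\rho_{W|\theta}^u\|\rho_{W|\theta}^0}+\bigO{\alpha_n^3},
\end{align}
and additivity of KL divergence on products gives a covertness cost of $\tfrac{n\alpha_n^2}{2}\max_\theta\eta(\cdot)+\bigO{n\alpha_n^3}$. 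Picking $\alpha_n\sim\sqrt{2\delta_n/(n\max_\theta\eta(\cdot))}$ saturates the covertness constraint. Conditioned on the random codeword $\mathbf{u}$, Alice faces a multi-hypothesis problem between product states whose Chernoff exponent for a confusable pair $(\theta,\theta')$ with $\rho_{B|\theta}^0=\rho_{B|\theta'}^0$ equals $N\cdot\Dcc{\theta}{\theta'}{P_\mathbf{u}}$, where $N=\card{\set{i\in\intseq{1}{n}:u_i\neq0}}$ and $P_\mathbf{u}$ is the empirical distribution of the non-$0$ symbols of $\mathbf{u}$, while non-confusable pairs are trivially distinguished at exponent $\Theta(n)\gg\sqrt{n\delta_n}$ by the $n-N\approx n$ innocent positions. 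A Bernstein bound gives $N\approx n\alpha_n$ and $P_\mathbf{u}\approx P$ except on an event of probability $o\pr{E(n,\delta_n)}$, so the multi-hypothesis quantum Chernoff bound yields $-\log E(n,\delta_n)\gtrsim\sqrt{n\delta_n}\cdot\sqrt{2}\min_{(\theta,\theta')}\Dcc{\theta}{\theta'}{P}/\sqrt{\max_\theta\eta(\cdot)}$; optimizing over $P$ recovers the right-hand side of~\eqref{eq:non-adaptive}.

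For the converse, I would argue symmetrically. The positive-definiteness of the second derivative of $\D{\cdot}{\rho_{W|\theta}^0}$ together with the support assumption~(3) yields a quadratic lower bound on the Willie-side KL divergence, forcing any feasible $Q_n$ to have an expected number of non-$0$ symbols at most $\bigO{\sqrt{n\delta_n/\eta_{\min}}}$; assumption~(2) rules out the degenerate case where Alice could reproduce $\rho_{W|\theta}^0$ using non-$0$ inputs. The converse half of the multi-hypothesis Chernoff theorem then upper-bounds the discrimination exponent of any confusable pair by the expected active weight times $\Dcc{\theta}{\theta'}{P_n}$, where $P_n$ is the empirical conditional input distribution of $Q_n$, and optimizing over $P_n$ gives the matching upper bound.

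The main technical obstacle is the uniform control of the Taylor remainders: the $\bigO{\alpha_n^3}$ third-order term must be shown to be $o(\delta_n)$, which together with the concentration of $N$ and $P_\mathbf{u}$ necessitates the standing assumption $\delta_n=\omega(\log n/n)$. In addition, the multi-hypothesis quantum Chernoff bound has to be applied to product states that vary across positions (they depend on $\mathbf{u}$), so a type-based conditioning combined with a uniform version of the bounds in~\cite{Li_2016} is required, together with a matching converse that accommodates arbitrary non-i.i.d.\ $Q_n$ rather than only product distributions.
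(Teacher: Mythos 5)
Your high-level plan shares the key ingredients of the paper's proof: dephasing through the cq structure to reduce to a classical input distribution on $\calU^n$, the multi-hypothesis quantum Chernoff bound of~\cite{Li_2016}, and a second-order Taylor expansion of the quantum relative entropy around $\rho_{W|\theta}^0$. The reduction observation that neither Willie's metric nor Alice's discrimination depends on off-diagonal components of $\phi_{A^n}$ (so entanglement with $R$ buys nothing beyond classical randomness) is correct and is used implicitly in the paper.

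However, the achievability as you have set it up has a genuine gap. You take $Q_n = Q^{\pn}$ i.i.d.\ and assert that the bad-type event has probability $o(E(n,\delta_n))$ by a Bernstein bound. That assertion is false in general. The target error is $E(n,\delta_n)=\exp\!\bigl(-(1+o(1))\,n\alpha_n\min_{\theta\neq\theta'}\Dcc{\theta}{\theta'}{P}\bigr)$ with $n\alpha_n = \Theta(\sqrt{n\delta_n})$, while under $Q^{\pn}$ the all-zeros event — under which every confusable pair $(\theta,\theta')$ with $\rho_{B|\theta}^0 = \rho_{B|\theta'}^0$ is completely indistinguishable and Alice must err with constant probability — has probability $(1-\alpha_n)^n = \exp(-(1+o(1))n\alpha_n)$, which is of the \emph{same} exponential order $\exp(-\Theta(\sqrt{n\delta_n}))$. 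Whenever $\min_{\theta\neq\theta'}\Dcc{\theta}{\theta'}{P}>1$ this atypical contribution dominates, so the i.i.d.\ strategy provably does not attain the claimed exponent. The paper makes this point explicitly and circumvents it by conditioning the i.i.d.\ measure on a narrow band of types around $P$ (the set $\calA$ in Eqs.~\eqref{eq:pun-def1}--\eqref{eq:pun-def}, analyzed in Lemma~\ref{lm:spec-pu-prop}); the conditioning removes the atypical mass entirely, at the cost of a total-variation perturbation of order $\exp(-\Theta(\alpha_n n))$ on Willie's side, which is absorbed into the covertness budget by the continuity Lemma~\ref{lm:rel-cont}. Some fix of this kind is essential; a union bound over i.i.d.\ types does not close the argument.

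The converse sketch is also too thin. There is no ready-made ``converse half of the multi-hypothesis Chernoff theorem'' for product states whose factors depend on the position $i$ through an arbitrary (possibly non-product) $P_{\mathbf{U}}$. The paper's Lemma~\ref{lm:one-shot-converse} obtains~\eqref{eq:one-shot-converse1} in four steps: lower-bounding the estimation error by an averaged trace distance, passing to a pair of classical distributions via the Nussbaum--Szko{\l}a decomposition, a method-of-types calculation giving the worst-case type, and an evaluation that collapses the classical Chernoff quantity back to the conditional quantum Chernoff information $\Dcc{\theta}{\theta'}{T_U}$. Your proposal would have to reproduce something equivalent; pointing to ``the converse half of the Chernoff theorem'' leaves the substantive part of the converse unargued.
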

\begin{remark}
\cite[Th.~1]{Tahmasbi2020b} can be obtained as a special case of Theorem~\ref{th:n-adap-disc}, corresponding to the situation in which all operators in $\set{\qy{u}}_{u\in \calU, \theta \in \Theta}$ and all operators in $\set{\qz{u}}_{u\in \calU, \theta \in \Theta}$ mutually commute.
\end{remark}
\subsection{Unitary channels: the power of entanglement}
We now consider a situation in which $\set{U_\theta}_{\theta\in \Theta}$ is a family of unitaries acting on $A$ such that $\calN_\theta(\rho) = U_\theta \rho U_\theta^\dagger$ for all $\theta \in \Theta$. This corresponds to an ideal situation in which Alice is able to receive all transmitted signals without any loss. We also assume that $\calE_\theta$ is independent of $\theta$, i.e., there exists a quantum channel $\calE:\calL(A)\to \calL(W)$ such that $\calE_\theta = \calE$ for all $\theta\in \Theta$. The latter assumption helps us  simplify the expression of our results but our proof does not exploit this assumption. Also note that this assumption does not trivialize the problem since we still require the sensing to be covert.

\subsubsection{Achievability}
We require again mild assumptions in our achievability result to make the problem meaningful.
\begin{enumerate}
    \item $U_\theta \neq U_{\theta'}$ for all $\theta\neq \theta'$. Without this assumption Alice would be unable to distinguish at least two parameters.
    \item $\supp{\calE(\rho)} \subset \supp{\calE(\kb{0})}$ for all $\rho \in \calD(A)$. Without this assumption, the transmission of such $\rho$ would allow Willie to systematically detect Alice.
\end{enumerate}
\begin{theorem}
\label{th:entang-cov-est-part1}
Under the above assumptions, there exists a positive integer $N$ depending on $\set{U_\theta}_{\theta\in \Theta}$ such that for all $n \geq N$
\begin{align}
    C(n, 0) \leq \bigO[\set{U_\theta}, \calE]{\frac{1}{n}}.
\end{align}
\end{theorem}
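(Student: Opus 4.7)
The plan is to reduce the construction of $\ket\phi_{RA^n}$ to a choice of reduced density $\rho_{A^n}$, and then to dilute a perfect-discrimination block of size $N$ uniformly at random among $L = \lfloor n/N\rfloor$ disjoint slots while filling the remaining positions with the innocent state $\ket 0$; the randomization yields a Petz $\chi_2$-divergence that decays as $1/n$ while preserving orthogonality of Alice's outputs. The key reduction is the Schmidt-basis identity
\begin{align}
\langle \psi_\theta|\psi_{\theta'}\rangle \;=\; \bra\phi I_R\otimes \pr{U_\theta^\dagger U_{\theta'}}^{\proddist n}\ket\phi \;=\; \tr{\rho_{A^n}\,\pr{U_\theta^\dagger U_{\theta'}}^{\proddist n}},
\end{align}
which shows that zero-error estimation is equivalent to the linear constraints $\tr{\rho_{A^n}V^{\proddist n}} = 0$ for all $V\in\calV \eqdef \set{U_\theta^\dagger U_{\theta'}:\theta\neq\theta'}$ and depends on $\ket\phi_{RA^n}$ only through $\rho_{A^n}$.

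First I would invoke Ac\'in's result on perfect discrimination of finitely many pairwise distinct unitaries with finite parallel uses~\cite{Acin_2001} (applied pair-by-pair and tensored over the finitely many elements of $\calV$) to obtain $N\in\N$ and $\ket\chi_{R'A^N}$ such that $\chi_{A^N}\eqdef\ptr{R'}{\kb\chi}$ satisfies $\tr{\chi_{A^N}V^{\proddist N}}=0$ for every $V\in\calV$. For $n\geq N$, I would partition the first $LN$ positions into disjoint slots $S_1,\dots,S_L$ of size $N$, denote by $\chi^{(\ell)}$ the state that acts as $\chi_{A^N}$ on the positions in $S_\ell$ and as $\kb 0$ on each of the remaining $n-N$ positions, and set
\begin{align}
\rho_{A^n} \eqdef \frac{1}{L}\sum_{\ell=1}^L \chi^{(\ell)}.
\end{align}
Because $V^{\proddist n}$ factorizes across the slot/non-slot split, $\tr{\rho_{A^n}V^{\proddist n}} = \frac{1}{L}\sum_\ell \tr{\chi_{A^N}V^{\proddist N}}\pr{\bra 0 V\ket 0}^{n-N} = 0$ for every $V\in\calV$, and any purification $\ket\phi_{RA^n}$ of $\rho_{A^n}$ with $\dim R = \dim A^n$ achieves zero-error discrimination via the POVM obtained by extending $\set{\ketbra{\psi_\theta}{\psi_\theta}}_\theta$ to an orthonormal basis.

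Second, I would bound the covertness through the Petz $\chi_2$-divergence, using $D(\rho\|\sigma)\leq \log(1+\chi_2(\rho\|\sigma))\leq \chi_2(\rho\|\sigma)$. Writing $\omega\eqdef \calE(\kb 0)$ and $\nu\eqdef \calE^{\proddist N}(\chi_{A^N})$ and expanding $\tr{\calE^{\proddist n}(\rho_{A^n})^2\,\pr{\omega^{\proddist n}}^{-1}}$ as a double sum over $(\ell_1,\ell_2)\in\intseq{1}{L}^2$, the tensor-product structure factorizes every term into slot-wise traces. A direct calculation using $\tr\omega = 1$ and cyclicity shows that the $L$ diagonal terms each equal $1+\chi_2(\nu\|\omega^{\proddist N})$ while the $L(L-1)$ off-diagonal terms each collapse to $1$, giving after division by $L^2$
\begin{align}
\chi_2\pr{\calE^{\proddist n}(\rho_{A^n})\|\omega^{\proddist n}} = \frac{\chi_2(\nu\|\omega^{\proddist N})}{L}.
\end{align}
The support hypothesis $\supp{\calE(\rho)}\subset \supp{\calE(\kb 0)}$ propagates to tensor products by complete positivity, so $\chi_2(\nu\|\omega^{\proddist N})$ is a finite constant depending only on $\set{U_\theta}$ and $\calE$. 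Combined with the $\theta$-independence of $\calE$, this yields $C(n,0)\leq \bigO[\set{U_\theta},\calE]{1/n}$.

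The main obstacle is the $\chi_2$-computation: since each individual summand in the expansion of $\tr{\sigma_n^2\tau_n^{-1}}$ is $\Theta(1)$, naive convexity of quantum relative entropy in its first argument gives only a constant bound on covertness; the $1/n$ decay emerges from the exact cancellation of the $L(L-1)$ off-diagonal contributions (each $1$) against the $-1$ normalization in $\chi_2(\rho\|\sigma) = \tr{\rho^2\sigma^{-1}}-1$, leaving only the $L$ diagonal terms to carry the $1+\chi_2(\nu\|\omega^{\proddist N})$ weight. Handling the at most $N-1$ leftover positions in the partition (which tensor through trivially) and verifying the Petz inequality $D\leq D_2$ in this non-commuting setting are routine.
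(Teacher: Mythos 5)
Your proposal is correct and matches the paper's proof in all essentials: an Ac\'in-type finite gadget (tensored over all pairs $\theta\neq\theta'$) is diluted uniformly over $L=\Theta(n)$ disjoint slots with $\ket 0$ padding, zero error follows from orthogonality of the $\ket{\psi_\theta}$'s via the identity $\braket{\psi_\theta}{\psi_{\theta'}}=\tr{\rho_{A^n}(U_\theta^\dagger U_{\theta'})^{\pn}}$, and covertness follows from $D\leq\chi_2$ together with the factorization $\chi_2(\calE^{\pn}(\rho_{A^n})\|\calE(\kb 0)^{\pn})=\chi_2(\nu\|\calE(\kb 0)^{\proddist N})/L$. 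The only stylistic difference is that you rederive this $\chi_2$-factorization explicitly via the diagonal/off-diagonal cancellation in the double sum, whereas the paper imports the resulting $D\leq\chi_2/\ell$ bound from an external reference and then explicitly controls the single-block constant by $\lambda_{\min}(\calE(\kb 0))^{-m}$.
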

Note that there is a significant difference between the optimal performance of  unitary channels and cq-channels. Indeed, according to Theorem~\ref{th:n-adap-disc}, we have $C(n, \exp(-\bigO{\sqrt{n} \delta}) \geq \delta $ for all $\delta>0$ and for all cq-channels, while $C(n, 0) \leq \bigO{1/n}$ when Alice's channel is a unitary for all parameters $\theta$. As we show next, the rate of decay of the covertness metric with $n$ is optimal under mild assumptions. 
\subsubsection{Converse}
Our converse result holds under the following mildly restrictive assumptions.
\begin{enumerate}
    \item $U_\theta \ket{0} = U_{\theta'} \ket{0}$ for some $\theta \neq \theta'$, i.e., Alice cannot distinguish all parameters by always sending  $\ket{0}$ and trivially ensuring covertness.
    \item $\calE(\rho) \neq \calE(\kb{0})$ for all $\rho \in \calD(A) \setminus \set{\kb{0}}$, i.e., Alice cannot simulate sending $\ket{0}$ using other quantum states.
    \item  There exists no sequence $\set{\rho_n}_{n\geq 1} \subset \calD(A) \setminus{\kb{0}}$ such that 
\begin{align}
    \lim_{n\to \infty} \frac{\norm[1]{\rho_n - \kb{0}}}{\norm[1]{\calE(\rho_n) - \calE(\kb{0})}} = \infty. 
\end{align}
This last assumption prevents Alice to send states whose image under $\calE$ is close to $\calE(\kb{0})$. We show that testing this assumption is possible by providing a \emph{computable} equivalent form in Lemma~\ref{lm:cont-q-ch}.
\end{enumerate}

\begin{theorem}
\label{th:entang-cov-est-part2}
    Under the above assumptions, for all $\epsilon \in [0, 1]$, we have
    \begin{align}
        C(n, \epsilon) \geq \bigO[\calE ]{\frac{ \pr{1-\epsilon}^4}{n}}.
    \end{align}
\end{theorem}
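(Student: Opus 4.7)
The plan is to translate Alice's estimation-error constraint into a lower bound on how far her input $\phi_{A^n}$ is from $\kb{0}^\pn$, propagate this bound through $\calE$ on a site-by-site basis using Assumption~3, and assemble the result via the chain rule for quantum relative entropy. By Assumption~1, pick $\theta\neq\theta'$ with $U_\theta\ket{0}=U_{\theta'}\ket{0}$ and set $V\eqdef U_\theta^\dagger U_{\theta'}$, a nontrivial unitary on $A$ fixing $\ket{0}$. Using $\Gamma_\theta$ as a binary test for $\theta$ versus $\theta'$, both error types are bounded by $\epsilon$; since the states $\left(\one_R\otimes U_\theta^\pn\right)\ket{\phi}$ and $\left(\one_R\otimes U_{\theta'}^\pn\right)\ket{\phi}$ are pure, the Helstrom bound yields $\abs{\bra{\phi}\one_R\otimes V^\pn\ket{\phi}}^2\leq 4\epsilon(1-\epsilon)$.

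Let $p\eqdef\bra{\phi}\one_R\otimes\kb{0}^\pn\ket{\phi}$ and decompose $\ket{\phi}$ orthogonally between the $\kb{0}^\pn$-subspace of $R\otimes A^n$ and its complement. Because $V^\pn\ket{0}^\pn=\ket{0}^\pn$, all cross terms vanish and $\bra{\phi}\one_R\otimes V^\pn\ket{\phi}=p+(1-p)w$ for some $w$ with $\abs{w}\leq 1$; combining with the Helstrom estimate forces a lower bound of the form $1-p\geq c_1(1-\epsilon)^2$ for a universal constant $c_1>0$, via elementary inequalities that convert $\tfrac{1}{2}-\sqrt{\epsilon(1-\epsilon)}$ into a polynomial in $(1-\epsilon)$. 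Next, for each $i\in\intseq{1}{n}$, set $\rho^{(i)}\eqdef\tr_{A^n\setminus\{i\}}\phi_{A^n}$ and $q_i\eqdef\bra{0}\rho^{(i)}\ket{0}$. A union bound in the measurement basis $\{\kb{0},\one_A-\kb{0}\}^{\pn}$ gives $\sum_i(1-q_i)\geq 1-p$, and the projector $\kb{0}$ witnesses $\norm[1]{\rho^{(i)}-\kb{0}}\geq 2(1-q_i)$.

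From Assumption~3, combined with the compactness of $\calD(A)$ and continuity of the ratio $\rho\mapsto\norm[1]{\calE(\rho)-\calE(\kb{0})}/\norm[1]{\rho-\kb{0}}$ away from $\rho=\kb{0}$, one extracts a constant $c>0$ depending only on $\calE$ such that $\norm[1]{\calE(\rho)-\calE(\kb{0})}\geq c\,\norm[1]{\rho-\kb{0}}$ for every $\rho\in\calD(A)$. The chain rule for quantum relative entropy then gives
\begin{align}
\D{\calE^\pn(\phi_{A^n})}{\calE(\kb{0})^\pn}\geq\sum_{i=1}^n\D{\calE(\rho^{(i)})}{\calE(\kb{0})},
\end{align}
and on each summand Pinsker together with the continuity bound yields $\D{\calE(\rho^{(i)})}{\calE(\kb{0})}\geq 2c^2(1-q_i)^2$. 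A Cauchy-Schwarz step bounds $\sum_i(1-q_i)^2\geq\tfrac{(1-p)^2}{n}$, and substituting the $\epsilon$-dependent lower bound on $1-p$ yields the claimed $(1-\epsilon)^4/n$ scaling with a constant depending only on $\calE$.

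The most delicate step is extracting the uniform constant $c$ from Assumption~3: since $\norm[1]{\rho-\kb{0}}\to 0$ as $\rho\to\kb{0}$, the ratio above is not continuous at $\kb{0}$, and one must combine the sequential non-blow-up hypothesis with a compactness argument on the punctured set $\calD(A)\setminus\{\kb{0}\}$; isolating this in a separate lemma (as the paper already announces) is the cleanest route. A secondary source of care lies in converting the raw Helstrom bound $\tfrac{1}{2}-\sqrt{\epsilon(1-\epsilon)}$ into a clean power of $(1-\epsilon)$, where tracking numerical constants is necessary to land on the announced $(1-\epsilon)^4$ exponent.
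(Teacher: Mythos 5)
Your proof follows the same architecture as the paper's: use a pair $\theta\neq\theta'$ with $U_\theta\ket{0}=U_{\theta'}\ket{0}$ to turn the estimation constraint into a lower bound on $1-\bra{0}^{\otimes n}\phi_{A^n}\ket{0}^{\otimes n}$, single-letterize via a union bound and the uniform constant extracted from Assumption~3 (Lemma~\ref{lm:cont-q-ch}), and aggregate via superadditivity of the quantum relative entropy, Pinsker, and Cauchy--Schwarz. Passing directly to $p\eqdef\bra{0}^{\otimes n}\phi_{A^n}\ket{0}^{\otimes n}$ through the orthogonal decomposition $\bra{\phi}\one_R\otimes V^{\otimes n}\ket{\phi}=p+(1-p)w$ with $\abs{w}\leq1$ is a legitimate shortcut; the paper instead first bounds $\norm[1]{\phi_{A^n}-\kb{0}^{\otimes n}}$ and only then passes to fidelity, which costs an extra squaring.

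However, Step~1 has a genuine gap. The quantity $\tfrac12-\sqrt{\epsilon(1-\epsilon)}$ cannot be bounded below by $c_1(1-\epsilon)^2$ for any universal $c_1>0$: at $\epsilon=\tfrac12$ the left-hand side vanishes while $(1-\epsilon)^2=\tfrac14$. The correct conversion is
\begin{align}
\tfrac12-\sqrt{\epsilon(1-\epsilon)}=\frac{\pr{\sqrt{1-\epsilon}-\sqrt{\epsilon}}^2}{2}=\frac{(1-2\epsilon)^2}{2\pr{\sqrt{1-\epsilon}+\sqrt{\epsilon}}^2}\geq\frac{(1-2\epsilon)^2}{4},
\end{align}
a polynomial in $(1-2\epsilon)$, not $(1-\epsilon)$. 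Propagated through your Steps~2--4 this gives $\delta\geq c_\calE\,(1-2\epsilon)^4/n$, which necessarily degenerates as $\epsilon\to\tfrac12$: sending $\kb{0}^{\otimes n}$ incurs zero covertness cost while still achieving max-error $\tfrac12$ on the pair $\theta,\theta'$, so no nontrivial lower bound on $C(n,\tfrac12)$ can hold. For what it is worth, the paper's own Step~1 has a cognate issue---it asserts the estimation error is at least $\abs{\braket{\psi_\theta}{\psi_{\theta'}}}^2$, whereas the minimax Helstrom error for two pure states with squared overlap $F$ is $(1-\sqrt{1-F})/2$, giving only $F\leq 4\epsilon(1-\epsilon)$---so both derivations reach the stated $(1-\epsilon)^4$ exponent only through this loose step. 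To land on a clean statement, either restrict to $\epsilon\leq\epsilon_0<\tfrac12$ and absorb $(1-2\epsilon)^4/(1-\epsilon)^4$ into the $\calE$-dependent constant, or state the bound with $(1-2\epsilon)^4$.
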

\begin{remark}
 Note that Theorem~\ref{th:entang-cov-est-part1} and Theorem~\ref{th:entang-cov-est-part2} require different assumptions because we need to restrict Willie's power for achievability and Alice's power for converse.
\end{remark}

We now provide a computable equivalent form of our last assumption in the converse result. We first need the following definition, which introduces a map from $\calL(A)$ to a Euclidean space.
\begin{definition}
\label{def:f}
Let $d\eqdef \dim A $ and $\set{\ket{a_1}, \cdots, \ket{a_d}}$ be an orthonormal basis for $A$ such that $\ket{a_1} = \ket{0}$. We define a function $f:\calL(A) \to \R^{2d^2}$ which maps $X\in \calL(A)$ to the vector $(\textnormal{Re}(\bra{a_i} X \ket{a_j}), \textnormal{Im}(\bra{a_i} X \ket{a_j}))_{i,j\in \intseq{1}{d}}$.

We also define $2d - 2$ vectors  $a_1, \cdots, a_{2d-2} \in \R^{2d^2}$ such that the $j$th component of $a_i$ is $\indic{i = j} + \indic{j = \lceil i / 2 \rceil d + 1}$.
\end{definition}
We show in Appendix~\ref{sec:cont-q-ch} that the vectors $a_1, \cdots, a_{2d-2} \in \R^{2d^2}$ form a basis for the tangent space of 
\begin{align}
    f(\set{\kb{\phi}: \phi \in A, \nrm{\phi} = 1} \setminus \kb{0})
\end{align}
at the origin.  

\begin{lemma}
\label{lm:cont-q-ch}
Let $\calE: \calL(A) \to \calL(B)$ be a quantum channel and $\ket{0} \in A$ be a unit vector. Suppose that $\calE(\rho) \neq \calE(\kb{0})$ for all $\rho \in \calD(A) \setminus \set{\kb{0}}$. 
We then have 
\begin{align}
\sup_{\rho  \in \calD(A) \setminus \set{\kb{0}}} \frac{\nrm{\rho - \kb{0}}_1}{\nrm{\calE(\rho) - \calE(\kb{0})}_1} < \infty
\end{align}
if and only if $f(\ker \calE) \cap \textnormal{span}(a_1, \cdots, a_{2d - 2}) = \set{0} $ where the function $f$ and the vectors $a_1, \cdots, a_{2d - 2}$ are defined in Definition~\ref{def:f}.
\end{lemma}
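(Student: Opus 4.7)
I would prove the two directions separately, reducing each to a compactness / tangent-direction analysis near $\kb{0}$.

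\emph{Necessity of the intersection condition.} Suppose $f(\ker\calE)\cap \textnormal{span}(a_1,\ldots,a_{2d-2})\neq\set{0}$, and pick $X\in\ker\calE$ with $f(X)$ a nonzero element of this span. By the characterization quoted just before the lemma, the $a_i$'s are a basis for the tangent space at the origin of the pure-state manifold $f(\set{\kb\phi:\nrm\phi=1}\setminus\kb0)$, so $X$ is the derivative at $t=0$ of a smooth curve $t\mapsto\kb{\phi_t}$ of pure states with $\phi_0=\ket 0$. Taylor-expanding gives $\kb{\phi_t}-\kb 0=tX+O(t^2)$, hence $\nrm{\kb{\phi_t}-\kb 0}_1=\Theta(t)$, whereas $\calE(\kb{\phi_t})-\calE(\kb 0)=t\calE(X)+O(t^2)=O(t^2)$ because $\calE(X)=0$. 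The ratio diverges as $t\to 0^+$, contradicting the finiteness of the supremum.

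\emph{Sufficiency.} I argue the contrapositive. Assume the supremum is infinite. Continuity of $\calE$ together with the standing hypothesis $\calE(\rho)\neq\calE(\kb 0)$ for all $\rho\neq\kb 0$ makes the denominator bounded below by a positive constant on any compact subset of $\calD(A)$ not containing $\kb 0$, so the blow-up must occur along a sequence $\rho_n\to\kb 0$. Set $t_n\eqdef\nrm{\rho_n-\kb 0}_1\to 0$ and $X_n\eqdef (\rho_n-\kb 0)/t_n$; then $\nrm{X_n}_1=1$ and $\calE(X_n)\to 0$. By compactness of the unit trace-norm sphere in $\calL(A)$, a subsequence satisfies $X_n\to X^*$ with $\nrm{X^*}_1=1$ and $X^*\in\ker\calE$. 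In the block form associated with $A=\C\ket 0\oplus\ket 0^\perp$ I write
\begin{align*}
X^*=\begin{pmatrix}-s & v^\dagger\\ v & T\end{pmatrix},\qquad T\succeq 0,\ \tr{T}=s\geq 0.
\end{align*}

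The heart of the argument is to show $s=0$, after which $X^*$ becomes purely off-diagonal, hence lies in the pure-state tangent space, and $f(X^*)$ is a nonzero element of $f(\ker\calE)\cap\textnormal{span}(a_1,\ldots,a_{2d-2})$, proving the contrapositive. To rule out $s>0$, I exploit the Schur positivity of $\rho_n$: writing $\rho_n$ in the same block form, the off-diagonal block $b_n$ lies in $\textnormal{range}(c_n)$ for every $n$. Passing to a further subsequence that stabilizes $\textnormal{rank}(c_n)$ and the associated spectral splitting of $c_n$ at the scale $t_n$ (using compactness of the relevant Grassmannians and pigeonholing eigenvalue scales), I obtain in the limit $v\in\textnormal{range}(T)$. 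With this range condition a direct Schur-complement computation shows $\kb 0+\tau X^*$ is PSD for all sufficiently small $\tau>0$, hence a valid density matrix distinct from $\kb 0$; but $\calE(\kb 0+\tau X^*)=\calE(\kb 0)+\tau\calE(X^*)=\calE(\kb 0)$, contradicting the hypothesis that $\calE$ separates $\kb 0$ from every other density matrix. Therefore $s=0$ and the proof is complete.

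\emph{Main obstacle.} The delicate step is exactly the Schur/Grassmannian argument that yields $v\in\textnormal{range}(T)$ in the limit. The constraint $c_n\succeq b_n b_n^\dagger/a_n$ only forces $b_n\in\textnormal{range}(c_n)$, and when some eigenvalues of $c_n$ decay strictly faster than $t_n$, $\textnormal{rank}(T)$ can be strictly smaller than $\textnormal{rank}(c_n)$, so one must match the scaling of $t_n$ with the several spectral decay rates of $c_n$ via careful subsequence extraction to conclude that any component of $v$ orthogonal to $\textnormal{range}(T)$ vanishes asymptotically. Once this is in hand, the Taylor expansion, the compactness extraction, and the appeal to the injectivity hypothesis are routine.
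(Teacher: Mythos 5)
Your necessity direction is correct and matches the paper's: a nonzero $X\in\ker\calE$ whose image $f(X)$ lies in the tangent span arises as the derivative at $t=0$ of a smooth curve $t\mapsto\kb{\phi_t}$ of pure states, and along that curve the numerator is $\Theta(t)$ while the denominator is $\bigO{t^2}$ (and nonzero by the injectivity hypothesis), so the ratio diverges.

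The sufficiency direction has a genuine gap, and it sits exactly where you flag the ``main obstacle.'' Your plan is to show that the accumulation direction $X^*$ is itself purely off-diagonal, which you want to get by forcing $v\in\textnormal{range}(T)$ after a subsequence extraction. That step cannot be repaired by any amount of Grassmannian stabilization, because the tangent cone of $\calD(A)$ at $\kb{0}$ is strictly larger than the set of directions satisfying the range condition. Concretely, for $d=3$ take
\begin{align}
\rho_n \eqdef \begin{pmatrix}1-t_n/2-t_n^{3/2}&0&t_n\\0&t_n/2&0\\t_n&0&t_n^{3/2}\end{pmatrix},\qquad t_n\downarrow 0.
\end{align}
Each $\rho_n$ is a density operator (the $\{1,3\}$ principal block has determinant $\approx t_n^{3/2}-t_n^{2}>0$), $\rho_n\to\kb{0}$, and
\begin{align}
\frac{\rho_n-\kb{0}}{t_n}\ \longrightarrow\ \begin{pmatrix}-1/2&0&1\\0&1/2&0\\1&0&0\end{pmatrix},
\end{align}
which in your block notation has $T=\textnormal{diag}(1/2,0)$ and $v=(0,1)^{\mathsf T}\notin\textnormal{range}(T)$. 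Since the limit is unique, no further subsequence or eigenvalue-scale pigeonholing can change it; the Schur-complement step is blocked and $s>0$ is not ruled out. The deeper issue is that the $X^*$ you extract from an arbitrary blow-up sequence need not be the witness the lemma asserts exists: $\ker\calE$ might meet the tangent span in some \emph{other} direction even while your particular $X^*$ has a nonzero lower-right block. The paper sidesteps this by an extra reduction (Step~1 of Proposition~\ref{prop:bound-tang}): moving a point $x\in C=f(\calD(A)-\kb{0})$ along the segment toward its projection $Px$ onto $E=f(\ker\calE)$ only increases $\nrm{x}/\nrm{(\one-P)x}$, so the supremum over $C\setminus\set{0}$ equals the supremum over $\partial C\setminus\set{0}$; a blow-up sequence restricted to the boundary then produces a limit direction in the tangent space of $\partial C$ by construction, which is exactly the conclusion your interior-sequence extraction fails to give. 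To rescue a compactness-style argument you would need to import this boundary reduction (or an equivalent device) \emph{before} extracting $X^*$.
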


\begin{proof}
See Appendix~\ref{sec:cont-q-ch}.
\end{proof}

\begin{remark}
If $\calE:\calL(A) \to \calL(E)$ is an invertible map (as a linear map), there exists no sequence satisfying the above conditions, which is consistent with our result as $\ker \calE = \set{0}$. However, there might be some quantum channels $\calE$ that  are not invertible, but for which we still have $f(\ker \calE) \cap \textnormal{span}\set{a_1, \cdots, a_{2d-2}} = \set{0}$.
\end{remark}
\section{Proofs}
\label{sec:proofs}
\subsection{Achievability proof of Theorem~\ref{th:n-adap-disc}}
We first derive a general bound on the reliability of a strategy when the input is generated according to $P_{\mathbf{U}}$.
\begin{lemma}
\label{lm:one-shot-achv}
Let $P_{\mathbf{U}}$ be any distribution over $\calU^n$. There exists an $(n, \epsilon, \delta)$ non-adaptive strategy with
\begin{align}
\label{eq:one-shot-est}
    \log \epsilon &= \max_{\theta\neq \theta'}\log \pr{\sum_{Q \in \calP_{n}(\calU)}P_{\mathbf{U}}(\calT_{Q}) \exp\pr{-n\Dcc{\theta}{\theta'}{Q}}} +  \bigO[\dim B, \card{\calU}, \card{\Theta}]{\log n}\\
    \delta &= \max_{\theta\in \Theta} \D{\sum_{\mathbf{u}} P_{\mathbf{U}}(\mathbf{u}) \qpz{\mathbf{u}}}{\qpz{\mathbf{0}}}.
    \label{eq:one-shot-cov}
\end{align}
\end{lemma}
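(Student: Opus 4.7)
The plan is to let Alice sample her classical input $\mathbf{U}\sim P_{\mathbf{U}}$, store a copy in the reference register $R$, and use an input-conditional POVM on Bob's side built from pairwise quantum Chernoff tests. Concretely, I would take $\ket{\phi}_{RA^n} \eqdef \sum_{\mathbf{u}} \sqrt{P_{\mathbf{U}}(\mathbf{u})}\, \ket{\mathbf{u}}_R \otimes \ket{a_{\mathbf{u}}}_{A^n}$ with $\ket{a_{\mathbf{u}}} \eqdef \bigotimes_i \ket{a_{u_i}}$. Since the vectors $\ket{a_{\mathbf{u}}}$ are mutually orthogonal, the post-channel state is classical-quantum,
\begin{equation}
\psi_{\theta, RB^n} = \sum_{\mathbf{u}} P_{\mathbf{U}}(\mathbf{u}) \kb{\mathbf{u}}_R \otimes \qpy{\mathbf{u}},
\end{equation}
and it is enough to consider POVMs of the form $\Gamma_\theta = \sum_\mathbf{u} \kb{\mathbf{u}}_R \otimes M^{\mathbf{u}}_\theta$ for some input-dependent POVM $\{M^{\mathbf{u}}_\theta\}_\theta$ on $B^n$.

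For the reliability part I would, for each $\mathbf{u}$ and each pair $\theta\neq\theta'$, apply the binary quantum Chernoff bound of Audenaert et al.\ to the two product states $\qpy{\mathbf{u}}$ and $\qpyp{\mathbf{u}}$. Using $(\bigotimes_i \rho_i)^s = \bigotimes_i \rho_i^s$ together with multiplicativity of the trace under tensor products, a direct computation that regroups the $n$ factors by the common value of $u_i$ yields
\begin{equation}
\inf_{s\in[0,1]}\tr{(\qpy{\mathbf{u}})^s(\qpyp{\mathbf{u}})^{1-s}} = \exp\bigl(-n\Dcc{\theta}{\theta'}{T_{\mathbf{u}}}\bigr),
\end{equation}
matching the definition of $\Dcc{\cdot}{\cdot}{\cdot}$ in~\eqref{eq:chernoff-inf-def}. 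The $\binom{\card{\Theta}}{2}$ optimal pairwise tests are then combined into a single POVM $\{M^{\mathbf{u}}_\theta\}_\theta$, either via a standard sequential binary-test construction or by invoking Li's multiple-state quantum Chernoff theorem, at the price of a factor polynomial in $n$, $\card{\Theta}$, $\card{\calU}$, and $\dim B$, so that
\begin{equation}
1 - \tr{M^{\mathbf{u}}_\theta \qpy{\mathbf{u}}} \leq n^{\bigO[\dim B, \card{\calU}, \card{\Theta}]{1}} \sum_{\theta'\neq \theta} \exp\bigl(-n\Dcc{\theta}{\theta'}{T_{\mathbf{u}}}\bigr).
\end{equation}

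Averaging over $\mathbf{u}$, grouping by type classes via $\sum_{\mathbf{u}\in \calT_Q}P_\mathbf{U}(\mathbf{u}) = P_\mathbf{U}(\calT_Q)$, and bounding $\sum_{\theta'\neq \theta}(\cdot) \leq (\card{\Theta}-1)\max_{\theta'\neq \theta}(\cdot)$, I obtain
\begin{equation}
\epsilon \leq n^{\bigO{1}} \max_{\theta \neq \theta'} \sum_{Q\in \calP_n(\calU)} P_\mathbf{U}(\calT_Q)\exp\bigl(-n\Dcc{\theta}{\theta'}{Q}\bigr),
\end{equation}
and taking logarithms recovers~\eqref{eq:one-shot-est}. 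Covertness is then immediate: tracing out $R$ gives $\phi_{A^n} = \sum_{\mathbf{u}} P_\mathbf{U}(\mathbf{u}) \kb{a_{\mathbf{u}}}_{A^n}$, hence $\calE_\theta^{\pn}(\phi_{A^n}) = \sum_{\mathbf{u}} P_\mathbf{U}(\mathbf{u}) \qpz{\mathbf{u}}$ and $\calE_\theta^{\pn}(\kb{0}^{\otimes n}) = \qpz{\mathbf{0}}$, so the definition in~\eqref{eq:def-cov} yields~\eqref{eq:one-shot-cov}.

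I expect the only delicate step to be the multi-hypothesis combination, i.e.\ stitching the $\binom{\card{\Theta}}{2}$ Chernoff-optimal pairwise tests into a single POVM while preserving each pairwise exponent. Because $\card{\Theta}$ is fixed and the lemma only asks for $\bigO{\log n}$ slack, a standard sequential-test construction is sufficient; should tighter polynomial control be needed, one would instead invoke a Nussbaum--Szko{\l}a or Li-style explicit construction and carefully track its polynomial overhead in the local dimension $\dim B$ and the number of hypotheses.
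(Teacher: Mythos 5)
Your construction and the paper's are essentially the same: sample $\mathbf{U}\sim P_{\mathbf{U}}$, condition the POVM on $\mathbf{u}$, reduce to multi-hypothesis discrimination of the product states $\qpy{\mathbf{u}}$, compute the pairwise Chernoff quantity to get $\exp(-n\Dcc{\theta}{\theta'}{T_{\mathbf{u}}})$, and average over $\mathbf{u}$ by type class. The covertness identity is also the same (trace out $R$, identity of Willie's observed state with the $P_{\mathbf{U}}$-mixture).

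The one place you are vague is precisely the step you flag as delicate. You offer two options: a ``standard sequential binary-test construction'' or Li's multiple-state quantum Chernoff theorem. For quantum states the first option is not actually standard and does not straightforwardly work, since binary measurements disturb the post-measurement state, so you cannot simply apply the $\binom{\card{\Theta}}{2}$ pairwise optimal tests in sequence while retaining every pairwise exponent. The paper commits to the second option: it invokes Li's result (the paper's Lemma~\ref{lm:mult-hyp-q}), which gives a single POVM with
\begin{align}
\max_{\theta}\tr{\rho_\theta(\one-\Gamma_\theta)} \leq 10(\card{\Theta}-1)^2\max_\theta\nu(\rho_\theta)\sum_{\theta\neq\theta'}\inf_{s\in[0,1]}\tr{\rho_\theta^s\rho_{\theta'}^{1-s}},
\end{align}
and the crucial technical point you omit is that the overhead factor is $\nu(\rho_\theta)$, the number of \emph{distinct} eigenvalues of $\qpy{\mathbf{u}}$, which the paper bounds by $(n+1)^{\dim B\,\card{\calU}}$ because $\qpy{\mathbf{u}}$ is an $n$-fold tensor product drawn from at most $\card{\calU}$ factors each of dimension $\dim B$. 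That is exactly what makes the slack $\bigO[\dim B,\card{\calU},\card{\Theta}]{\log n}$ rather than something worse. So: drop the sequential-test alternative, invoke Li's theorem directly, and supply the eigenvalue-count bound to close the argument.
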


\begin{proof}
See Appendix~\ref{sec:one-shot-achv}.
\end{proof}

Deterministic strategies, for which $P_{\mathbf{U}}$ is positive only on one input sequence,  cannot achieve any positive exponent as shown next. Let $\widetilde{\Theta}$ be as defined in Section~\ref{sec:main-res-cq} and $\theta\in \widetilde{\Theta}$ be such that   no distribution $P$ over $\calU\setminus \set{0}$ is such that $\sum_u P(u) \qz{u}= \qz{0}$. If Alice transmits a fixed sequence $\mathbf{u}$, we have
\begin{align}
\delta 
&\geq \D{\qpz{\mathbf{u}}}{\qpz{\mathbf{0}} } \\
& \geq \card{\set{i\in \intseq{1}{n}: u_i \neq  0 }} \min_{u \in \calU \setminus \set{0}} \D{\qz{u}}{\qz{0}}.
\end{align}
By our assumption on $\theta$,  $\min_{u \in \calU \setminus \set{0}} \D{\qz{u}}{\qz{0}}$ is positive. Therefore, the number of non-zero elements of $\mathbf{u}$ is uniformly bounded. By  definition of $\widetilde{\Theta}$, there also exists $\theta' \in \Theta \setminus \set{\theta}$ such that $\qy{0} = \qyp{0}$. Thus, even when restricting the parameter set to $\set{\theta, \theta'}$, the estimation error cannot vanish. Hence, no positive exponent is achievable.

Furthermore, \ac{iid} actions cannot achieve the optimal exponent since, with exponentially small probability, the type of the input sequence largely deviates from the typical  input type and affects the achievable exponent.

We now introduce an input probability distribution $P_{\mathbf{U}}$ that circumvents the challenges discussed above. Intuitively, $P_{\mathbf{U}}$ should be such that 1) we can control the type of the sequences in its support and 2) we can ensure covertness.  Let $P$ be any distribution over $\calU$ and define $\alpha \eqdef 1 - P(0)$. We  set for $\zeta > 0$,
\begin{align}
\label{eq:pun-def1}
    \calQ &\eqdef \set{Q \in \calP_n(\calU): |Q(u) - P(u)| \leq \alpha \zeta~\forall u \in \calU\setminus\set{0}},\\
    \calA &\eqdef \cup_{Q \in \calQ} \calT_Q,\\ 
    P_{\mathbf{U}}(\mathbf{u}) &\eqdef \begin{cases}\frac{P^\pn(\mathbf{u})}{P^\pn( \calA)}\quad& \mathbf{u} \in \calA, \\ 0 \quad & \mathbf{u} \notin \calA.  \end{cases}\label{eq:pun-def}
\end{align}
Intuitively, the parameter $\alpha$ allows us to finely control the type of sequences in $\calA$ with $\alpha$ possibly depending on $n$.
In the following lemma, we provide bounds on \eqref{eq:one-shot-est} and \eqref{eq:one-shot-cov} for this specific choice of $P_{\mathbf{U}}$.
\begin{lemma}
\label{lm:spec-pu-prop}
Let $\theta$ and $\theta'$ be two distinct elements of $\Theta$. We have
\begin{multline}
    \log\pr{ \sum_{Q \in \calP_{n}(\calU)}P_{\mathbf{U}}(\calT_{Q}) \exp\pr{-n\Dcc{\theta}{\theta'}{Q}}} \leq \min\left(-n \Dcc{\theta}{\theta'}{P} - \bigO[\set{\qy{u}}]{n\alpha\zeta \card{\calU}}, \right.\\
    \left.n(1- \alpha(1 + \zeta \card{\calU})) \inf_{s\in [0, 1]} \log \pr{ \tr{ \pr{\qy{0}}^s \pr{\qyp{0}}^{1-s}}}\right).\label{eq:est-err-1}
\end{multline}
In addition, we have
\begin{multline}
\label{eq:covertness-pu}
    \D{\sum_{\mathbf{u}} P_{\mathbf{U}}(\mathbf{u}) \qpz{\mathbf{u}}}{\qpz{\mathbf{0}}} 
    \leq n \D{\sum_u P(u) \qz{u}}{\qz{0}} \\
    + 2 \card{\calU}\exp\pr{-\frac{\alpha n \zeta^2}{3}} \log \pr{\frac{\dim W}{\lambda_{\min}\pr{\qz{0}}}} n + \Hb{2 \card{\calU}\exp\pr{-\frac{\alpha n \zeta^2}{3}}}.
\end{multline}
\end{lemma}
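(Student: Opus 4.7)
The plan is to establish the two bounds separately, both exploiting that by construction $P_{\mathbf{U}}$ is $P^{\proddist n}$ conditioned on the typical-type set $\calA$. Throughout I write $p \eqdef P^{\proddist n}(\calA)$ and recall that $P_{\mathbf{U}}$ is supported on sequences whose empirical type $Q$ satisfies $\nrm{Q-P}_1 \leq 2(\card{\calU}-1)\alpha\zeta$.

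For \eqref{eq:est-err-1}, I first upper bound the left-hand side by $\exp(-n\min_{Q\in\calQ}\Dcc{\theta}{\theta'}{Q})$ and then establish the two lower bounds on $\min_{Q\in\calQ}\Dcc{\theta}{\theta'}{Q}$ separately. The first follows from Lipschitz continuity of $\Dcc{\theta}{\theta'}{\cdot}$: since it is a supremum over $s$ of functions affine in $Q$ with coefficients $-\log c_s(u) \eqdef -\log\tr{(\qy{u})^s (\qyp{u})^{1-s}}$, evaluating the supremum at the $s^*$ optimal for $P$ and using $\nrm{Q-P}_1 = \bigO{\card{\calU}\alpha\zeta}$ yields $\Dcc{\theta}{\theta'}{Q}\geq\Dcc{\theta}{\theta'}{P} - \bigO[\set{\qy{u}}]{\card{\calU}\alpha\zeta}$. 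The second follows from $c_s(u)\leq 1$, so that for any $s$,
\begin{align}
    -\Dcc{\theta}{\theta'}{Q} \leq \sum_u Q(u)\log c_s(u) \leq Q(0)\log c_s(0);
\end{align}
taking the infimum over $s$ and combining with $Q(0) \geq 1 - \alpha(1+(\card{\calU}-1)\zeta) \geq 1-\alpha(1+\card{\calU}\zeta)$ and $\inf_s\log c_s(0) \leq 0$ yields the claimed expression.

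For \eqref{eq:covertness-pu}, I compare $\sigma_n \eqdef \sum_{\mathbf{u}} P_{\mathbf{U}}(\mathbf{u})\qpz{\mathbf{u}}$ with its unconditioned counterpart $\sigma_n^{\mathrm{iid}} \eqdef (\sum_u P(u)\qz{u})^{\proddist n}$. Writing $\sigma_n^{\mathrm{iid}} = p\sigma_n + (1-p)\sigma'$ for a suitable state $\sigma'$ and applying the triangle inequality gives $\nrm{\sigma_n - \sigma_n^{\mathrm{iid}}}_1 \leq 2(1-p)$, and $1-p$ is controlled by a multiplicative Chernoff bound applied to the Bernoulli indicators $\indic{U_i=u}$ for each $u\in\calU\setminus\set{0}$, followed by a union bound, yielding $1-p \leq 2\card{\calU}\exp(-n\alpha\zeta^2/3)$. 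I then decompose
\begin{align}
    \D{\sigma_n}{\qpz{\mathbf{0}}} - \D{\sigma_n^{\mathrm{iid}}}{\qpz{\mathbf{0}}} = \avgH{\sigma_n^{\mathrm{iid}}} - \avgH{\sigma_n} - \tr{(\sigma_n - \sigma_n^{\mathrm{iid}})\log\qpz{\mathbf{0}}},
\end{align}
bound the entropy difference via Fannes--Audenaert (producing the $n(1-p)\log\dim W$ and $\Hb{1-p}$ contributions), bound the trace term by H\"older's inequality using $\nrm{\log\qpz{\mathbf{0}}}_\infty \leq n\abs{\log\lambda_{\min}(\qz{0})}$, and invoke additivity of relative entropy on tensor products to get $\D{\sigma_n^{\mathrm{iid}}}{\qpz{\mathbf{0}}} = n\D{\sum_u P(u)\qz{u}}{\qz{0}}$. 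Summing these contributions yields the stated bound.

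The main technical obstacle is keeping the Lipschitz constant of $\Dcc{\theta}{\theta'}{\cdot}$ finite and uniform in $s$ and $u$: $\log c_s(u)$ could in principle diverge if the supports of $\qy{u}$ and $\qyp{u}$ differ, especially at the boundary of $s\in[0,1]$. The standing structural assumptions on the cq-channel (together with $\theta,\theta'$ ranging over a finite set and $u$ over $\calU$) ensure sufficient regularity so that $\sup_{s,u}\abs{\log c_s(u)}$ is a finite channel-dependent quantity, which is absorbed into the $\bigO[\set{\qy{u}}]{\cdot}$ notation.
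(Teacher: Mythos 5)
Your proposal is correct and follows essentially the same strategy as the paper: for \eqref{eq:est-err-1} you bound the left-hand side by the worst type in $\calQ$ and exploit that such $Q$ is $\bigO{\alpha\zeta\card{\calU}}$-close to $P$ (for the first term of the $\min$) and has $Q(0)\geq 1-\alpha(1+\zeta\card{\calU})$ (for the second), while for \eqref{eq:covertness-pu} you control the trace distance between $\sum_{\mathbf{u}}P_{\mathbf{U}}(\mathbf{u})\qpz{\mathbf{u}}$ and the unconditioned i.i.d.\ state $(\sum_u P(u)\qz{u})^\pn$ via a Chernoff/union bound and finish with a Fannes-type continuity bound for the relative entropy together with tensorization, exactly as in Lemma~\ref{lm:rel-cont}. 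Your Lipschitz-continuity phrasing of the type-perturbation step for the first bound is a clean way to organize the same estimate; the paper instead drops the $u=0$ term and uses a uniform lower bound $L_f$ on $\log\tr{\pr{\qy{u}}^s\pr{\qyp{u}}^{1-s}}$, but the content and the hidden constants are the same.
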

\begin{proof}
See Appendix~\ref{sec:spec-pu-prop}.
\end{proof}
We are now ready to prove the achievability of the exponent in \eqref{eq:non-adaptive}. Let $\overline{P}$ be any distribution over $\calU\setminus \set{0}$ (not depending on $n$) and $\set{\lambda_n}_{n\geq 0}$ be a vanishing sequence specified later. We define
\begin{align}
   \label{eq:alpha-def} \alpha_n & \eqdef \sqrt{\frac{2\delta_n(1-\lambda_n) }{n\max_{\theta} \eta(\sum_u \overline{P}(u) \qz{u}\| \qz{0})}},\\
    P(u) &\eqdef \begin{cases}1 - \alpha_n \quad &u = 0,\\ \alpha_n \overline{P}(u)\quad &u \neq 0.\end{cases}
\end{align}
We then choose $P_{\mathbf{U}}$ according to \eqref{eq:pun-def}, for which we have
\begin{align}
   & \D{\sum_{\mathbf{u}} P_{\mathbf{U}}(\mathbf{u}) \qpz{\bfu}}{\qpz{\mathbf{0}}} \\
   & \stackrel{(a)}{\leq} n \D{\sum_u P(u) \qz{u}}{\qz{0}} + 2 \card{\calU}\exp\pr{-\frac{\alpha_n n \zeta^2}{3}} \log \frac{ \dim W}{\lambda_{\min}\pr{\qz{0}}} n + \Hb{2 \card{\calU}\exp\pr{-\frac{\alpha_n n \zeta^2}{3}}}\\
    &\stackrel{(b)}{=} n \D{\sum_u P(u) \qz{u}}{\qz{0}}+ \exp\pr{-\omega\pr{\log n}}\\
    &\stackrel{(c)}{=} n\pr{\frac{\alpha_n^2}{2}\eta(\sum_u \overline{P}(u)\qz{u}\| \qz{0} ) + \bigO{\alpha_n^3} } +  \exp\pr{-\omega\pr{\log n}} \\
    &\stackrel{(d)}{\leq} (1-\lambda_n)\delta_n + \bigO{\alpha_n^3 n}  +  \exp\pr{-\omega\pr{\log n}}\\
    &= (1-\lambda_n)\delta_n + \bigO{\frac{\delta_n^{\frac{3}{2}}}{\sqrt{n}}}  +  \exp\pr{-\omega\pr{\log n}}
\end{align}
where $(a)$ follows from \eqref{eq:covertness-pu}, $(b)$ follows since we are choosing $\delta_n = \omega\pr{\frac{\log n}{n}}$, $(c)$ follows from Lemma~\ref{lm:approx-rel-entropy} in Appendix~\ref{sec:approx-rel-entropy} and $(d)$ follows from \eqref{eq:alpha-def}. We set
\begin{align}
    \lambda_n = \bigO{\frac{\delta_n^{\frac{1}{2}}}{\sqrt{n}}}  + \frac{1}{\delta_n} \exp\pr{-\omega\pr{\log n}},
\end{align}
which is vanishing and ensures that $\D{\sum_{\mathbf{u}} P_{\mathbf{U}}(\mathbf{u}) \qpz{\bfu}}{\qpz{\mathbf{0}}} \leq \delta_n$. Therefore, by Lemma~\ref{lm:one-shot-achv}, there exists an $(n, \epsilon_n, \delta_n)$ with
\begin{align}
    \log \epsilon_n = \max_{\theta\neq \theta'} \log\pr{ \sum_{Q \in \calP_{n}(\calU)}P_{\mathbf{U}}(\calT_{Q}) \exp\pr{-n\Dcc{\theta}{\theta'}{Q}}} + \bigO[\dim B, \card{\calU}, \card{\Theta}]{\log n} \label{eq:est-error}.
\end{align}
To  upper-bound $\epsilon_n$, we consider two cases for $\theta$ and $\theta'$. If $\qy{0} = \qyp{0}$, then \eqref{eq:est-err-1} yields that
\begin{align}
-\log \pr{\sum_{Q\in \calP_n(\calU)}P_{\mathbf{U}}(\calT_{Q}) \exp\pr{-n \Dcc{\theta}{\theta'}{Q}}}
&\geq n\Dcc{\theta}{\theta'}{P} +\bigO[\set{\qz{u}}]{n\alpha_n\zeta \card{\calU}} \\
&\stackrel{(a)}{\geq} n\alpha_n(1 + \zeta \card{\calU} \bigO[\set{\qz{u}}]{1})) \Dcc{\theta}{\theta'}{\overline{P}}, \label{eq:bound-zero-eq-epsilon}
\end{align}
where $(a)$ follows from the definition of $\overline{P}$ and since all terms in the definition of $\Dcc{\theta}{\theta'}{P}$ are non-negative. If $\qy{0} \neq \qyp{0}$,  we have
\begin{multline}
-\log \pr{P_{\mathbf{U}}(\calT_{Q}) \exp\pr{-n\Dcc{\theta}{\theta'}{P}}}\\  \geq  -(1- \alpha_n(1 + \zeta \card{\calU})) n \inf_{s\in [0, 1]} \log \pr{\tr{ \pr{\qy{0}}^s \pr{\qyp{0}}^{1-s}}} \stackrel{(a)}{=} \Theta(n),
\end{multline}
where $(a)$ follows since $\inf_{s\in [0, 1]} \log \pr{\tr{ \pr{\qy{0}}^s \pr{\qyp{0}}^{1-s}}} < 0$ when $\qy{0} \neq \qyp{0}$.
Therefore, we can exclude all pairs $(\theta, \theta')$ with $\qy{0} \neq \qyp{0}$ from the maximization in \eqref{eq:est-error} for large enough $n$. Thus, using Lemma~\ref{lm:one-shot-achv} and \eqref{eq:bound-zero-eq-epsilon}, we have
\begin{align}
-\log \epsilon_n \geq n\alpha_n  (1 + \zeta \card{\calU}\bigO[\set{\qz{u}}]{1}) \min_{\theta, \theta': \qy{0} = \qyp{0}}  \Dcc{\theta}{\theta'}{\overline{P}}  + \bigO[\dim B, \card{\calU}, \card{\Theta}]{\log n}.
\end{align}
Using the definition of $\alpha_n$ in \eqref{eq:alpha-def}, we obtain
\begin{align}
\liminf_{n\to \infty} - \frac{\log \epsilon_n}{\sqrt{\delta_n n}} \geq \frac{\sqrt{2}\min_{\theta\neq \theta': \qy{0} = \qyp{0}}\Dcc{\theta}{\theta'}{\overline{P}}}{\sqrt{ \max_{\theta} \eta(\sum_u \overline{P}(u)\qz{u}\| \qz{0})}}.
\end{align}

\subsection{Converse proof of Theorem~\ref{th:n-adap-disc}}
Let us consider a sequence of $(n, \epsilon_n, \delta_n)$ non-adaptive strategies, for which the input is generated according to a \ac{PMF} $P_{\bfU}$ over $\calU^n$ in the $n^{th}$ strategy. We define
\begin{align}
    \overline{P} &\eqdef \frac{1}{n} \sum_{i=1}^n P_{U_i},\\
    \alpha_n &\eqdef 1- \overline{P}(0),\\
    \widetilde{P}(u) &\eqdef \begin{cases}\frac{\overline{P}(u)}{\alpha_n}\quad& u \neq 0,\\ 0 \quad & u = 0.\end{cases}
\end{align}

\begin{lemma}
\label{lm:one-shot-converse}
We have 
\begin{align}
    \label{eq:one-shot-converse1}
    -\log \epsilon_n \leq   \min_{\theta\neq \theta':\qy{0} = \qyp{0}}  n \alpha_n \Dcc{\theta}{\theta'}{\widetilde{P}} - \bigO[\dim B, \card{\calU}, \card{\Theta}]{\frac{\log n}{n}}.
\end{align}
As $n$ goes to infinity, $\alpha_n$ tends to zero and we have
\begin{align}
    \label{eq:one-shot-converse2}
    \frac{\delta_n}{n} \geq \frac{\alpha_n^2}{2} \max_{\theta}\eta(\sum_u \widetilde{P}(u) \qz{u}\| \qz{0}) + \bigO[\set{\qz{0}}]{\alpha_n^3}.
\end{align}
\end{lemma}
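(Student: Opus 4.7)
Both parts are converses that we plan to derive by reducing to classical statistics over the random input $\bfU$: for~(1), a binary hypothesis-testing reduction combined with a one-shot quantum Chernoff converse; for~(2), additivity and convexity of quantum relative entropy followed by a quadratic expansion around $\qz{0}$.

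For~(1), the plan is to fix any pair $\theta\neq\theta'$ with $\qy{0}=\qyp{0}$. Because multi-hypothesis testing cannot beat the optimal binary test on the same two states, $\epsilon_n\geq P_e^\star(\hat\rho_\theta,\hat\rho_{\theta'})$, where $\hat\rho_\theta$ denotes Alice's joint reference-output state. Since $\calN_\theta$ is cq, any pure input $\ket{\phi}_{RA^n}$ produces $\hat\rho_\theta=\sum_\bfu \ket{\chi_\bfu}\!\bra{\chi_\bfu}_R\otimes\bigotimes_i\qy{u_i}$ with $P_\bfU(\bfu)=\braket{\chi_\bfu}{\chi_\bfu}$; the trace-norm triangle inequality shows that replacing the possibly non-orthogonal vectors $\ket{\chi_\bfu}$ by an orthogonal cq register can only decrease the trace distance between the two hypothesis states, so it suffices to lower-bound $P_e^\star$ for the fully cq state, whose optimal test conditions on $\bfu$:
\begin{align}
\epsilon_n\geq\sum_\bfu P_\bfU(\bfu)\,P_e^\star\!\Bigl(\bigotimes_i\qy{u_i},\bigotimes_i\qyp{u_i}\Bigr).
\end{align}
A non-asymptotic product-state quantum Chernoff converse with polynomial prefactor~\cite{Nussbaum_Szkola_2009,Li_2016} gives, for every $s\in[0,1]$, $P_e^\star(\bigotimes_i\rho_i,\bigotimes_i\sigma_i)\geq\prod_i\tr{\rho_i^s\sigma_i^{1-s}}/\mathrm{poly}(n)$. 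Applying Jensen's inequality to move the expectation over $\bfU$ inside the exponential then yields
\begin{align}
\log\epsilon_n\geq -\bigO{\log n}+n\sum_u\overline{P}(u)\log\tr{(\qy{u})^s(\qyp{u})^{1-s}}.
\end{align}
The condition $\qy{0}=\qyp{0}$ annihilates the $u=0$ term, the substitution $\overline{P}(u)=\alpha_n\widetilde{P}(u)$ for $u\neq 0$ extracts the factor $\alpha_n$, and taking the supremum over $s$ converts the right-hand side into $n\alpha_n\Dcc{\theta}{\theta'}{\widetilde{P}}$ up to the polynomial correction.

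For~(2), the plan is to collapse the $n$-letter divergence to a single-letter one and then Taylor-expand. Because $\qpz{\mathbf{0}}=(\qz{0})^{\otimes n}$ is a product state, additivity of relative entropy with respect to the second argument gives $\D{\tau_\theta}{\qpz{\mathbf{0}}}\geq\sum_{i=1}^n \D{\tau_{\theta,i}}{\qz{0}}$ where $\tau_\theta=\sum_\bfu P_\bfU(\bfu)\qpz{\bfu}$ and $\tau_{\theta,i}$ is its $i$th marginal; joint convexity of $\D{\cdot}{\qz{0}}$ applied to the $n$-term average of the marginals then yields
\begin{align}
\delta_n\geq n\max_\theta \D{\sum_u\overline{P}(u)\qz{u}}{\qz{0}}.
\end{align}
Writing $\sum_u\overline{P}(u)\qz{u}=\qz{0}+\alpha_n\bigl(\sum_u\widetilde{P}(u)\qz{u}-\qz{0}\bigr)$ exhibits the averaged marginal as an $\alpha_n$-perturbation of $\qz{0}$, so plugging into the quadratic expansion of relative entropy around $\qz{0}$ (Lemma~\ref{lm:approx-rel-entropy}) contributes the advertised $\tfrac{\alpha_n^2}{2}\eta(\sum_u\widetilde{P}(u)\qz{u}\|\qz{0})+\bigO{\alpha_n^3}$, which one can pull outside the maximum since the remainder is uniform in $\theta$. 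The claim $\alpha_n\to 0$ then follows by contradiction: if $\alpha_n$ stayed bounded away from zero along a subsequence, the leading term would prevent $\delta_n/n$ from vanishing, contradicting $\delta_n=\bigO{1}$; Assumption~2 of Section~\ref{sec:main-res-cq} ensures $\eta(\cdot\|\qz{0})>0$ for at least one $\theta$, which is exactly what makes this contradiction meaningful.

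The main technical obstacle is the one-shot product-state quantum Chernoff converse with an explicit polynomial prefactor that is uniform in $s$: the asymptotic statement of~\cite{Nussbaum_Szkola_2009} does not immediately give a single inequality valid for every $n$. The cleanest route is the Nussbaum-Szkola classical embedding, which preserves $\tr{\rho^s\sigma^{1-s}}$ exactly and reduces the task to a non-iid classical Chernoff converse tractable by the method of types, at the price of a prefactor polynomial in $n$, $\dim B$, $\card{\calU}$, and $\card{\Theta}$. A secondary subtlety is the cq reduction above: one must argue that coherence between Alice's reference $R$ and the channel input cannot improve discrimination beyond the fully cq bound, which is precisely what the trace-norm triangle inequality on $\hat\rho_\theta-\hat\rho_{\theta'}$ gives.
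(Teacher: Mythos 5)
Your treatment of Eq.~\eqref{eq:one-shot-converse2} matches the paper's (cited) argument — super-additivity of relative entropy against a product second argument, convexity to collapse the $n$ marginals, the quadratic expansion of Lemma~\ref{lm:approx-rel-entropy}, uniformization of the remainder over the finite $\Theta$, and the contradiction via Assumption~2 to deduce $\alpha_n\to 0$. For Eq.~\eqref{eq:one-shot-converse1}, your orthogonalization of the entangled reference and the triangle-inequality bound on the binary error is a correct and welcome clarification of a step the appendix glosses over (it begins directly from a classical $P_{\bfU}$), and the route through the Nussbaum--Szkola embedding and a method-of-types classical Chernoff converse is the one the paper takes in its Steps~2--3.

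The weak link is the conversion from a per-$\bfu$ bound to the bound in terms of $\overline{P}$. The stated product-state converse, that ``for every $s\in[0,1]$, $P_e^\star\bigl(\bigotimes_i\rho_i,\bigotimes_i\sigma_i\bigr)\geq\prod_i\tr{\rho_i^s\sigma_i^{1-s}}/\mathrm{poly}(n)$,'' is not what the Chernoff converse gives; it only gives the bound at the infimum over $s$. (Already classically with $n=1$, $P=(1-\delta,\delta)$, $Q=(\delta,1-\delta)$: $\sum_v\min(P(v),Q(v))=2\delta$, yet at $s=1$ one has $\sum_v P(v)^1 Q(v)^0=1$.) Relatedly, ``taking the supremum over $s$'' to land on $-\Dcc{\theta}{\theta'}{\overline{P}}$ is mis-directed, since by definition $-\Dcc{\theta}{\theta'}{\overline{P}}$ is the \emph{infimum}, not the supremum, of $\sum_u\overline{P}(u)\log\tr{(\qy{u})^s(\qyp{u})^{1-s}}$. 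With the correct per-type converse $P_e^\star(\qpy{\bfu},\qpyp{\bfu})\geq\mathrm{poly}(n)^{-1}\exp\bigl(-n\,\Dcc{\theta}{\theta'}{T_\bfu}\bigr)$, Jensen yields $\log\epsilon_n\geq-\bigO{\log n}-n\,\E{\Dcc{\theta}{\theta'}{T_{\bfU}}}$, but $P\mapsto\Dcc{\theta}{\theta'}{P}$ is a supremum of linear functionals of $P$ and therefore \emph{convex}, so $\E{\Dcc{\theta}{\theta'}{T_{\bfU}}}\geq\Dcc{\theta}{\theta'}{\overline{P}}$ — the wrong inequality for concluding Eq.~\eqref{eq:one-shot-converse1}. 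Your overstated ``for every $s$'' premise is exactly what would make this step go through (it lets you fix a single $s$ before averaging, so the exponent becomes linear in $T_\bfu$ and the $\bfU$-average collapses to $\overline{P}$), but that premise is not available; the reduction from the per-type exponent to $\Dcc{\theta}{\theta'}{\overline{P}}$ requires handling the interchange of the $s$-optimization and the type-average with more care than a single Jensen inequality.
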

\begin{proof}
See Appendix~\ref{sec:one-shot-converse}.
\end{proof}
We therefore have
\begin{align}
    -\frac{\log \epsilon_n}{\sqrt{\delta_n n}}
    &\stackrel{(a)}{\leq} \frac{  n\alpha_n \min_{\theta\neq \theta': \qy{0} = \qyp{0}}\Dcc{\theta}{\theta'}{\widetilde{P}}- \bigO[\dim B, \card{\calU}, \card{\Theta}]{\frac{\log n}{n}}}{\sqrt{\delta_n n}}\\
    &\stackrel{(b)}{\leq} \frac{ \sqrt{\frac{2n\delta_n(1+o(1))}{\max_{\theta}\eta(\sum_u \widetilde{P}(u) \qz{u}\| \qz{0})}} \min_{\theta\neq \theta': \qy{0} = \qyp{0}}\Dcc{\theta}{\theta'}{\widetilde{P}}}{\sqrt{\delta_n n}}\\
    &= \sqrt{\frac{2(1+o(1))}{\max_{\theta}\eta(\sum_u \widetilde{P}(u) \qz{u}\| \qz{0})}} \min_{\theta\neq \theta': \qy{0} = \qyp{0}}\Dcc{\theta}{\theta'}{\widetilde{P}}
\end{align}
where $(a)$ follows from \eqref{eq:one-shot-converse1}, and $(b)$ follows from \eqref{eq:one-shot-converse2} and the constraint $\delta_n = \omega(\log n/n)$. Taking the limit  as $n$ goes to infinity, we obtain the desired converse bound.

 \subsection{Proof of Theorem~\ref{th:entang-cov-est-part1}}
We first recall from~\cite{Acin_2001} that given a unitary $U\neq \one_A$ acting on $A$, one can find a positive integer $m$ and a pure state $\ket{\nu}_{A^m}$ such that $\bra{\nu} U^{\proddist m} \ket{\nu} = 0$. Applying this result to the unitary $U_\theta^\dagger U_{\theta'} \neq \one_A$, there exist a  positive integer $m_{\theta, \theta'}$ and pure state $\ket{\nu_{\theta, \theta'}}_{A^{m_{\theta, \theta'}}}$ in $A^{m_{\theta, \theta'}}$ such that $\bra{\nu_{\theta, \theta'}} (U_\theta^\dagger U_{\theta'})^{\proddist m_{\theta, \theta'}} \ket{\nu_{\theta, \theta'}} = 0$. Let $m \eqdef \sum_{\theta\neq \theta'} m_{\theta, \theta'}$ and $\ket{\nu}_{A^m}$ be a pure state in $A^m$ defined as the tensor product of all $\ket{\nu_{\theta, \theta'}}_{A^{m_{\theta, \theta'}}}$ in an arbitrary order. Let $\ell \eqdef \lfloor n / m \rfloor$. Alice decomposes the first $m\ell$ channel uses into $\ell$ sub-blocks of length $m$, selects one sub-block at random, transmits $\ket{\nu}_{A^m}$ on that sub-block, and transmits $\ket{0}$ for any other channel use. By transmitting $\ket{\nu}_{A^m}$, Alice can estimate $\theta$ without error. 

We now analyze the covertness. Let us denote the state transmitted through the channels by 
\begin{align}
    \phi_{A^n} \eqdef \frac{1}{\ell} \sum_{i=1}^\ell (\kb{0})^{\proddist(i-1)m} \otimes \kb{\nu}_{A^m}\otimes (\kb{0})^{n - im}. 
\end{align}
Note that
\begin{align}
    \D{\calE^\pn(\phi_{A^n})}{\calE^\pn(\kb{0}^\pn)} 
    &\stackrel{(a)}{\leq} \frac{\chi_2(\calE^{\proddist m}(\kb{\nu}_{A^m}) \| \calE^{\proddist m}(\kb{0}^{\proddist m}))}{\ell}\\
    & \leq \frac{1}{\ell \lambda_{\min}(\calE(\kb{0}))^m}\\
    &\stackrel{(b)}{\leq} \frac{m}{(n-m) \lambda_{\min}(\calE(\kb{0}))^m},
\end{align}
where $(a)$ follows from~\cite[Eq. (B144)]{tahmasbi2018framework}, and $(b)$ follows since $\ell>(n-m)/m$. Since $m$ is a constant independent of $n$, we obtain the desired bound on the covertness.

\subsection{Proof of Theorem~\ref{th:entang-cov-est-part2}}
We consider a general strategy, in which Alice initially prepares $\ket{\phi}_{RA^n}$ and, after receiving $\ket{\psi_\theta}_{RA^n} \eqdef \pr{\one_R \otimes U_\theta^\pn} $ for an unknown parameter $\theta$, performs a POVM to estimate $\theta$. We assume that the the estimation error as defined in \eqref{eq:def-est-err} is $\epsilon$ and the covertness as defined in \eqref{eq:def-cov} is $\delta$. We desire to prove that $\delta \geq \bigO[\calE]{(1-\epsilon)^4/n}$. We show this result in three steps sketched as follows.
\begin{enumerate}
    \item We first use the assumption that $U_\theta \ket{0} = U_{\theta'}\ket{0}$ for some $\theta \neq \theta'$ to show that $\epsilon \geq 1 - 2\nrm{\phi_{A^n} - \pr{\kb{0}}^\pn}_1$.
    \item We upper-bound $\nrm{\phi_{A^n} - \pr{\kb{0}}^\pn}_1$ by $\bigO[\calE]{\pr{n\sum_{i=1}^n \D{\calE(\phi_{A_i})}{\calE(\kb{0})}}^{\frac{1}{4}}}$. The proof of this step relies on both  our assumptions on $\calE$, i.e., $\calE(\rho) \neq \calE(\kb{0})$ for all $\rho \in \calD(A) \setminus \set{\kb{0}}$ and 
    \begin{align}
    \sup_{\rho  \in \calD(A) \setminus \set{\kb{0}}} \frac{\nrm{\rho - \kb{0}}_1}{\nrm{\calE(\rho) - \calE(\kb{0})}_1} < \infty.
    \end{align}
    \item We use standard converse argument for covert communication to show that 
    \begin{align}
    \sum_{i=1}^n \D{\calE(\phi_{A_i})}{\calE(\kb{0})} \allowbreak \leq \delta.
    \end{align}
\end{enumerate}
Combining these three steps yields that $\epsilon \geq 1 - \bigO[\calE]{(n\delta)^{\frac{1}{4}}}$, which is equivalent to $\delta \geq \bigO[\calE]{(1-\epsilon)^{4}/n}$ as desired. We now prove each step.

\paragraph{Proof of step 1}  The estimation error, $\epsilon$, is lower-bounded by 
\begin{align}
    \max_{\theta \neq \theta'}\abs{\braket{\psi_\theta}{\psi_{\theta'}}}^2 
    &= \max_{\theta \neq \theta'} \abs{\bra{\phi} \pr{\one_R \otimes (U_\theta^\dagger U_{\theta'})^\pn}\ket{\phi}}^2\\
    &= \max_{\theta \neq \theta'} \abs{\tr{\phi_{A^n} (U_\theta^\dagger U_{\theta'})^\pn}}^2\\
    &= \max_{\theta \neq \theta'} \abs{\tr{\pr{ \pr{\kb{0}}^\pn - \phi_{A^n} } (U_\theta^\dagger U_{\theta'})^\pn} - \tr{\pr{\kb{0}}^\pn (U_\theta^\dagger U_{\theta'})^\pn}}^2\\
    &\stackrel{(a)}{\geq}  \min_{\theta \neq \theta'} \abs{\tr{\pr{\phi_{A^n} - \pr{\kb{0}}^\pn} (U_\theta^\dagger U_{\theta'})^\pn} - 1}^2\\
    &\stackrel{(b)}{\geq}  \min_{\theta \neq \theta'} 1 - 2\abs{\tr{\pr{\phi_{A^n} - \pr{\kb{0}}^\pn} (U_\theta^\dagger U_{\theta'})^\pn}}\\
    &\stackrel{(c)}{\geq} 1- 2\nrm{\phi_{A^n} - \pr{\kb{0}}^\pn}_1\\
\end{align}
where $(a)$ follows because $U_\theta \ket{0} = U_{\theta'} \ket{0} $ for some $\theta \neq \theta'$, $(b)$ follows from $|1-z|^2 = 1-2\text{Re}(z) + |z|^2 \geq 1- 2|z|$ for any complex number $z$, and $(c)$ follows from $\abs{\tr{XY}}\leq \nrm{X}\nrm{Y}_1$ for all $X, Y\in \calL(A^n)$ and $\nrm{(U_\theta^\dagger U_{\theta'})^\pn} = 1$. 

\paragraph{Proof of step 2} We have 
\begin{align}
    \nrm{\phi_{A^n} - \pr{\kb{0}}^\pn}_1
    &\stackrel{(a)}{\leq} \sqrt{1 - F(\phi_{A^n}, \pr{\kb{0}}^\pn)}\\
    &=\sqrt{1 - \bra{0}^\pn \phi_{A^n} \ket{0}^\pn}\\
    &\stackrel{(b)}{\leq} \sqrt{\sum_{i=1}^n\pr{1- \bra{0} \phi_{A_i} \ket{0}} } \\
    &\stackrel{(c)}{\leq }\sqrt{\sum_{i=1}^n\nrm{ \kb{0}  - \phi_{A_i} }_1 },
\end{align}
where $(a)$ follows from~\cite[Theorem 1]{Fuchs_1999}, $(b)$ follows from (classical) union bound, and $(c)$ follows since $1-F(\rho, \sigma) \leq \nrm{\rho-\sigma}_1$ when $\rho$ is pure. We now state a lemma that allows us to bound $\nrm{\rho - \kb{0}}_1$ using $\nrm{\calE(\rho - \kb{0})}_1$
By our assumption on $\calE$ there exists $B>0$ such that  for all $i\in \intseq{1}{n}$,
\begin{align}
    \nrm{ \kb{0}  - \phi_{A_i} }_1 
    &\leq B \nrm{\calE(\kb{0}) - \calE(\phi_{A_i})}_1\\
    &\leq B \sqrt{\D{\calE(\phi_{A_i})}{\calE(\kb{0})}}.
\end{align}
This implies that 
\begin{align}
\nrm{\phi_{A^n} - \pr{\kb{0}}^\pn}_1 
&\leq \sqrt{B} \sqrt{\sum_{i=1}^n \sqrt{\D{\calE(\phi_{A_i})}{\calE(\kb{0})}}}\\
&\leq \sqrt{B}  \sqrt{\sqrt{n} \sqrt{\sum_{i=1}^n \D{\calE(\phi_{A_i})}{\calE(\kb{0})}}}
\end{align}
\paragraph{Proof of step 3} We have
\begin{align}
    \D{\calE^{\pn}(\phi_{A^n})}{\calE^\pn\pr{\kb{0}^\pn}} 
    &= -H(\calE^{\pn}(\phi_{A^n})) + \tr{\calE^\pn\pr{\phi_{A^n}} \log \pr{\pr{\calE(\kb{0})}^\pn} }\\
    &=  -H(\calE^{\pn}(\phi_{A^n})) +  \sum_{i=1}^n \tr{\calE(\phi_{A_i}) \log\pr{\calE(\kb{0})} }\\
    &\geq -\sum_{i=1}^n H(\calE(\phi_{A_i})) +  \sum_{i=1}^n \tr{\calE(\phi_{A_i}) \log\pr{\calE(\kb{0})} }\\
    &= \sum_{i=1}^n \D{\calE(\phi_{A_i})}{\calE(\kb{0})}.
\end{align}

\appendices
\section{Approximation of quantum relative entropy}
We characterize  in the next lemma the expansion of $\D{\alpha \rho_1  + (1-\alpha)\rho_0}{\rho_0}$ in $\alpha$ around zero.
\label{sec:approx-rel-entropy}
\begin{lemma}
\label{lm:approx-rel-entropy}
Let $\rho_1$ and $\rho_0$ be two density operators on $A$ such that $\rho_0$ is invertible. We have for small $\alpha > 0$
\begin{align}
    \D{\alpha \rho_1  + (1-\alpha)\rho_0}{\rho_0} = \frac{1}{2} \alpha^2 \eta(\rho_1\|\rho_0) + \bigO[\rho_0]{\alpha^3}.
\end{align}
\end{lemma}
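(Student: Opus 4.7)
\textbf{Proof plan for Lemma~\ref{lm:approx-rel-entropy}.}
Write $\Delta \eqdef \rho_1 - \rho_0$ and $\sigma_\alpha \eqdef \rho_0 + \alpha \Delta$ so that $\sigma_\alpha = \alpha \rho_1 + (1-\alpha)\rho_0$. Since $\rho_0$ is invertible, $\sigma_\alpha$ remains positive definite for all $\alpha$ in a neighborhood of $0$ whose size depends only on $\rho_0$ (specifically on $\lambda_{\min}(\rho_0)$ and $\nrm{\Delta}$), and so $f(\alpha) \eqdef \D{\sigma_\alpha}{\rho_0}$ is smooth there. The plan is to Taylor expand $f$ at $\alpha=0$ and identify $f''(0)$ with $\eta(\rho_1\|\rho_0)$; the $\bigO[\rho_0]{\alpha^3}$ error term will follow from uniform bounds on $f'''$ over a fixed neighborhood of $0$ depending only on $\rho_0$ (and implicitly on $\rho_1$, but $\nrm{\Delta}_1\leq 2$).

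For the first two derivatives, I would use the scalar-function trace identity $\tfrac{d}{d\alpha}\tr{g(\sigma_\alpha)} = \tr{g'(\sigma_\alpha)\sigma_\alpha'}$ with $g(x)=x\log x$ to obtain
\begin{align}
f'(\alpha) = \tr{\Delta(\log \sigma_\alpha - \log \rho_0)} + \tr{\Delta} = \tr{\Delta(\log \sigma_\alpha - \log \rho_0)},
\end{align}
since $\tr{\Delta}=0$. In particular $f(0)=0$ and $f'(0)=0$. Differentiating once more yields $f''(\alpha) = \tr{\Delta\,\tfrac{d}{d\alpha}\log \sigma_\alpha}$, and I would invoke the integral representation
\begin{align}
\log X = \int_0^\infty \pr{\frac{1}{1+t}\one - \frac{1}{t\one + X}}\,dt,
\end{align}
valid for positive-definite $X$, which gives $\tfrac{d}{d\alpha}\log \sigma_\alpha = \int_0^\infty (t\one+\sigma_\alpha)^{-1} \Delta (t\one+\sigma_\alpha)^{-1}\,dt$. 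Substituting $\alpha=0$ produces
\begin{align}
f''(0) = \int_0^\infty \tr{\Delta (t\one+\rho_0)^{-1} \Delta (t\one+\rho_0)^{-1}}\,dt.
\end{align}

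To match the stated expression for $\eta$, I would insert the spectral decomposition $\rho_0 = \sum_i \lambda_i P_i$, so $(t\one+\rho_0)^{-1} = \sum_i (t+\lambda_i)^{-1} P_i$, expand the trace as $\sum_{i,j}(t+\lambda_i)^{-1}(t+\lambda_j)^{-1}\tr{\Delta P_i \Delta P_j}$, and evaluate the one-dimensional integrals by partial fractions: $\int_0^\infty\!\tfrac{dt}{(t+\lambda_i)(t+\lambda_j)}$ equals $(\log \lambda_i - \log \lambda_j)/(\lambda_i-\lambda_j)$ for $i\neq j$ and $1/\lambda_i$ for $i=j$. This reproduces exactly $\eta(\rho_1\|\rho_0)$ since $\Delta = \rho_1 - \rho_0$. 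The residual $\bigO[\rho_0]{\alpha^3}$ comes from Taylor's theorem applied to $f$, once one bounds $f'''$ uniformly on a $\rho_0$-dependent neighborhood of $0$ by a second differentiation of the integral representation (the resulting integrand is majorised by a fixed constant times $(t+\lambda_{\min}(\rho_0))^{-3}$, which is integrable).

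The main technical step is the derivative-of-log computation: the operator-ordering complications that would arise from naive calculus are cleanly avoided by the integral representation. Once that identity is in hand, the remaining manipulations are straightforward spectral bookkeeping and partial-fraction integration.
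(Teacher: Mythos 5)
Your proposal is correct and arrives at the right answer, but the route is genuinely different from the paper's. The paper avoids any integral representation of $\log$ entirely: it invokes a finite-dimensional Fr\'echet-differentiability result for the matrix logarithm (their Lemma~\ref{lm:log-appx}, cited from \cite{shapiro2002differentiability}), which directly gives $\log(\rho_0 + \alpha\Delta) - \log\rho_0 = \alpha\sum_{i,j} D_{i,j} P_i\Delta P_j + \bigO[\rho_0]{\alpha^2}$ with the coefficients $D_{i,j}$ already in the form appearing in $\eta$. From there the paper computes only $g'(\alpha) = \alpha\,\eta(\rho_1\|\rho_0) + \bigO[\rho_0]{\alpha^2}$ and finishes by integrating this first-derivative estimate via the fundamental theorem of calculus, so it never needs to control $g''$ or $g'''$. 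Your version instead derives the same $D_{i,j}$ from scratch by differentiating the integral representation $\log X = \int_0^\infty\pr{(1+t)^{-1}\one - (t\one+X)^{-1}}\,dt$ and evaluating $\int_0^\infty (t+\lambda_i)^{-1}(t+\lambda_j)^{-1}\,dt$ by partial fractions — a cleaner self-contained derivation — but then pays for it by having to bound $f'''$ uniformly for the Taylor remainder, which is a heavier (though entirely doable) piece of bookkeeping. One small caveat on your $f'''$ bound: after differentiating twice more you get integrands involving three resolvent factors $(t\one + \sigma_\alpha)^{-1}$, so the majorant is $(t + \lambda_{\min}(\sigma_\alpha))^{-3}$ rather than $(t+\lambda_{\min}(\rho_0))^{-3}$; you need to first shrink the neighborhood of $\alpha = 0$ so that $\lambda_{\min}(\sigma_\alpha) \geq \tfrac{1}{2}\lambda_{\min}(\rho_0)$, which you implicitly acknowledge at the start but should make explicit when stating the uniform bound. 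The one advantage of the paper's FTC-on-$g'$ trick is precisely that it sidesteps this third-derivative estimate, requiring uniformity only at the level of $g'$.
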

\begin{remark}
 This result is similar to~\cite[Lemma~1]{Wang2016c}, but the expression for $\eta(\rho_1\|\rho_0)$ in~\cite{Wang2016c} is 
\begin{align}
    \tr{\int_0^\infty \rho_1(\rho_0 + s )^{-1} \rho_1(\rho_0 + s)^{-1}ds} - 1,
\end{align}
which involves an integration. In addition, in Lemma~\ref{lm:approx-rel-entropy}, the constant behind the higher order term is independent of $\rho_1$, which is not shown in~\cite[Lemma~1]{Wang2016c} and which  is crucial in our converse argument. 
\end{remark}
We first recall two results from functional calculus before proving Lemma~\ref{lm:approx-rel-entropy}.
\begin{lemma}
\label{lm:log-appx}
Let $X$ be a positive operator in $\calL(\calH)$ with eigen-decomposition $X = \sum_{i=1}^q \lambda_i P_i$, where $\lambda_1, \cdots, \lambda_q$ are distinct eigenvalue of $X$ and $P_i$ is the projection onto the eigen-subspace corresponding to $\lambda_i$. There exists $\epsilon > 0$ such that for all $Y$ with $\norm{X-Y}\leq \epsilon$, $X+Y$ is positive and 
\begin{align}
    \log(X+Y) = \log(X) + \sum_{i,j} D_{i,j} P_iYP_j + \bigO[X]{\norm{Y}^2},
\end{align}
where 
\begin{align}
    D_{i,j} = \begin{cases}
    \frac{\log \lambda_i - \log \lambda_j}{\lambda_i - \lambda_j}&\quad i\neq j\\
    \frac{1}{\lambda_i}&\quad i=j
    \end{cases}
\end{align}
\end{lemma}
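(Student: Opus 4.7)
The plan is to express the matrix logarithm through the standard resolvent integral representation
\begin{align}
\log Z = \int_0^\infty \pr{\frac{1}{1+s}\one - (Z + s\one)^{-1}} ds,
\end{align}
valid for any positive operator $Z$, and then expand the resolvent to first order in $Y$. Subtracting this representation at $Z = X$ and at $Z = X + Y$ yields
\begin{align}
\log(X+Y) - \log X = \int_0^\infty \pr{(X+s\one)^{-1} - (X+Y+s\one)^{-1}} ds,
\end{align}
so the task reduces to expanding $(X+Y+s\one)^{-1}$ to first order in $Y$, uniformly in $s \geq 0$.

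I would apply the resolvent identity twice to write
\begin{align}
(X+Y+s\one)^{-1} = (X+s\one)^{-1} - (X+s\one)^{-1} Y (X+s\one)^{-1} + R(s),
\end{align}
with the explicit second-order remainder $R(s) \eqdef (X+s\one)^{-1} Y (X+s\one)^{-1} Y (X+Y+s\one)^{-1}$. Substituting the spectral decomposition $(X+s\one)^{-1} = \sum_i (\lambda_i + s)^{-1} P_i$ into the linear term and swapping sum and integral produces
\begin{align}
\sum_{i,j} P_i Y P_j \int_0^\infty \frac{ds}{(\lambda_i + s)(\lambda_j + s)}.
\end{align}
A direct partial-fractions computation then shows that this integral equals $\frac{\log \lambda_i - \log \lambda_j}{\lambda_i - \lambda_j}$ when $i \neq j$ and $\frac{1}{\lambda_i}$ when $i = j$, which is exactly $D_{i,j}$ as defined in the statement.

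The main technical obstacle, and the step that matters for the later use of this lemma in Lemma~\ref{lm:approx-rel-entropy}, is to control $\int_0^\infty R(s) ds$ with a constant depending only on $X$ and not on $Y$. To that end I would restrict to $\nrm{Y} \leq \epsilon \eqdef \lambda_{\min}(X)/2$, which ensures $X+Y$ remains positive and gives both $\nrm{(X+s\one)^{-1}} \leq (\lambda_{\min}(X) + s)^{-1}$ and $\nrm{(X+Y+s\one)^{-1}} \leq (\lambda_{\min}(X)/2 + s)^{-1}$. Composing these bounds yields
\begin{align}
\nrm{R(s)} \leq \frac{\nrm{Y}^2}{\pr{\lambda_{\min}(X)/2 + s}^3},
\end{align}
which is integrable on $[0,\infty)$ and produces $\nrm{\int_0^\infty R(s) ds} \leq \frac{2 \nrm{Y}^2}{\lambda_{\min}(X)^2}$. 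This gives the required $\bigO[X]{\nrm{Y}^2}$ remainder with a constant depending only on $\lambda_{\min}(X)$, completing the proof.
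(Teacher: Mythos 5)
Your proof is correct but takes a genuinely different route from the paper. The paper's proof is a one-line appeal to~\cite[Th.~4.2]{shapiro2002differentiability}, which gives Fr\'{e}chet differentiability of primary matrix functions together with the Daleckii--Krein formula for the derivative; the coefficients $D_{i,j}$ are precisely the first Loewner divided differences of $\log$. You instead give a self-contained derivation: you start from the integral representation $\log Z = \int_0^\infty \bigl(\tfrac{1}{1+s}\one - (Z+s\one)^{-1}\bigr)\,ds$, expand the resolvent $(X+Y+s\one)^{-1}$ via a double application of the resolvent identity, insert the spectral decomposition of $(X+s\one)^{-1}$, and evaluate $\int_0^\infty \tfrac{ds}{(\lambda_i+s)(\lambda_j+s)}$ by partial fractions to recover exactly $D_{i,j}$. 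The calculations all check: the partial-fractions integral gives $\tfrac{\log\lambda_i-\log\lambda_j}{\lambda_i-\lambda_j}$ for $i\neq j$ and $\tfrac{1}{\lambda_i}$ for $i=j$, and your remainder bound $\nrm{R(s)}\leq \nrm{Y}^2(\lambda_{\min}(X)/2+s)^{-3}$ integrates to $2\nrm{Y}^2/\lambda_{\min}(X)^2$. One thing your approach buys, which the paper's citation leaves implicit, is an explicit constant depending only on $\lambda_{\min}(X)$; this is exactly the $X$-dependence-only property that the authors flag as essential in the remark after Lemma~\ref{lm:approx-rel-entropy}, so your argument makes that dependence transparent rather than deferring it to a reference. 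The only small point you leave unstated is that $Y$ must be Hermitian for $X+Y$ to be positive and for the bound $\nrm{(X+Y+s\one)^{-1}}\leq(\lambda_{\min}(X)/2+s)^{-1}$ (via Weyl's inequality) to hold; this holds in the intended application ($Y=\alpha(\rho_1-\rho_0)$) and is clearly the intended hypothesis, but it is worth stating. You have also, reasonably, read the lemma's condition $\norm{X-Y}\leq\epsilon$ as the typo it evidently is for $\norm{Y}\leq\epsilon$.
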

 \begin{proof}
 
 It follows from applying ~\cite[Th.~4.2]{shapiro2002differentiability} to the function $\log(\cdot)$.
 \end{proof}
\begin{lemma}
\label{lm:diff-tr}
Let $I\subset \R$ be an open interval in $\R$ and  $f:I\to \R$ be a smooth function. Let $A$ and $B$ be two self-adjoint operator in $\calL(\calH)$. We define $g(t) \eqdef \tr{f(A+tB)}$ for all $t$ such that all eigenvalues of $A+tB$ are in $I$. Then, the domain of $g$ is open and for each $t$ in the domain of $g$,
\begin{align}
    g'(t) = \tr{f'(A+tB) B}.
\end{align}
\end{lemma}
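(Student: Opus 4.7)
My plan is to first establish openness of the domain of $g$, then prove the derivative formula by reduction to the polynomial case. For openness, I rely on the continuity of eigenvalues as functions of matrix entries (equivalently, continuity of the roots of the characteristic polynomial of $A+tB$ in $t$). If $t_0$ lies in the domain, the spectrum of $A+t_0 B$ is a finite subset of the open interval $I$, and by continuity the spectrum of $A+tB$ stays in $I$ for all $t$ in some open neighborhood of $t_0$. This also shows that for $t$ in a small enough neighborhood of $t_0$, we can fix a compact interval $[a,b]\subset I$ containing every eigenvalue of $A+tB$.

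For the derivative, I would first handle monomials $f(x)=x^n$ using the cyclic invariance of trace:
\begin{align}
\frac{d}{dt}\tr{(A+tB)^n}=\sum_{k=0}^{n-1}\tr{(A+tB)^{k}\,B\,(A+tB)^{n-1-k}}=n\tr{(A+tB)^{n-1}B},
\end{align}
which is exactly $\tr{f'(A+tB)B}$. By linearity, the identity then holds for every real polynomial $p$: $\frac{d}{dt}\tr{p(A+tB)}=\tr{p'(A+tB)B}$.

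To pass from polynomials to an arbitrary smooth $f$, fix $t_0$ in the domain of $g$ and choose a compact $[a,b]\subset I$ as above, valid on some neighborhood $J$ of $t_0$. By the Weierstrass approximation theorem applied to $f'$ on $[a,b]$, pick polynomials $q_k$ with $q_k\to f'$ uniformly on $[a,b]$, and define $p_k(x)\eqdef f(a)+\int_a^x q_k(s)\,ds$; then $p_k\to f$ uniformly on $[a,b]$ as well. By the spectral theorem, for every self-adjoint $M$ with spectrum in $[a,b]$, the operator norms $\nrm{p_k(M)-f(M)}$ and $\nrm{p_k'(M)-f'(M)}$ are bounded by the corresponding sup-norms of $p_k-f$ and $q_k-f'$ on $[a,b]$. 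Consequently, on the neighborhood $J$, the functions $g_k(t)\eqdef\tr{p_k(A+tB)}$ converge pointwise to $g(t)$, while the derivatives $g_k'(t)=\tr{p_k'(A+tB)B}$ converge uniformly to $\tr{f'(A+tB)B}$. The standard theorem on interchanging limits and derivatives then gives $g'(t)=\tr{f'(A+tB)B}$ for all $t\in J$, and since $t_0$ was arbitrary the identity holds on the whole domain of $g$.

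The main potential obstacle is ensuring uniform convergence of \emph{both} $p_k\to f$ and $p_k'\to f'$ on the compact interval; applying Weierstrass directly to $f$ does not give control on derivatives, which is why I approximate $f'$ first and recover $p_k$ by integration. Everything else is routine: the trace is continuous in operator norm, and the spectrum of $A+tB$ stays inside $[a,b]$ throughout $J$, so functional calculus is well defined and passes smoothly to the limit.
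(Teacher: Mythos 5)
Your proof is correct. The paper does not actually prove Lemma~\ref{lm:diff-tr}; it simply cites an equation from Wilde's book, so you have necessarily taken a different route by supplying a self-contained argument. Your route is the standard elementary one: verify the identity for monomials via cyclicity of the trace, extend to polynomials by linearity, and then pass to general smooth $f$ via uniform polynomial approximation on a compact interval containing the relevant spectra, invoking the theorem on term-by-term differentiation of uniformly convergent sequences. The one technical point you correctly flag and handle --- that Weierstrass applied to $f$ alone gives no control on $f'$, so one must approximate $f'$ and integrate to build the approximants $p_k$ --- is exactly the step that would sink a careless attempt, and your fix is sound. One minor bookkeeping remark: to go from operator-norm convergence of $p_k(A+tB)\to f(A+tB)$ and $p_k'(A+tB)\to f'(A+tB)$ to convergence of the traces you implicitly use $\abs{\tr{X}}\leq\dim(\calH)\,\nrm{X}$ and $\abs{\tr{XB}}\leq\nrm{X}\,\nrm{B}_1$; it would be worth stating these explicitly, but since $\calH$ is finite-dimensional they are immediate and the uniformity over $t\in J$ is preserved. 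With that spelled out, the argument is complete and gives a proof where the paper offers only a pointer to the literature.
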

\begin{proof}
See~\cite[Eq.~(11.176)]{Wilde_2017}.
\end{proof}
We now prove Lemma~\ref{lm:approx-rel-entropy}. Let $\rho_0$ has eigen-decomposition $\sum_i \lambda_i P_i$ and define
\begin{align}
    D_{i,j} = \begin{cases}
    \frac{\log \lambda_i - \log \lambda_j}{\lambda_i - \lambda_j}&\quad i\neq j\\
    \frac{1}{\lambda_i}&\quad i=j.
    \end{cases}
\end{align}
We also define $\Delta \eqdef \rho_1 - \rho_0$ and 
\begin{align}
    g(\alpha) 
    &\eqdef \D{\alpha \rho_1 + (1-\alpha) \rho_0}{\rho_0}\\
    &= \D{\rho_0 + \alpha \Delta}{\rho_0} \\
    &=\tr{\pr{\rho_0 + \alpha \Delta}\log\pr{\rho_0 + \alpha \Delta}} -\tr{\pr{\rho_0 + \alpha \Delta}\log\pr{\rho_0}}.
\end{align}
Note that 
\begin{align}
    g'(\alpha) 
    &\stackrel{(a)}{=} \tr{\Delta(\log(\rho_0+\alpha \Delta) + \one )} - \tr{\Delta \log \rho_0}\\
    &= \tr{\Delta\pr{\log(\rho_0+\alpha \Delta) - \log(\rho_0)}}\\
    &\stackrel{(b)}{=} \tr{\Delta\pr{ \sum_{i,j}D_{i,j}P_i(\alpha\Delta)P_j  + \bigO[\rho_0]{\norm{\alpha \Delta}^2} } }\\
    &= \tr{\Delta\pr{ \sum_{i,j}D_{i,j}P_i(\alpha\Delta)P_j  + \bigO[\rho_0]{\norm{\alpha \Delta}^2} }} \\
    &= \alpha \sum_{i,j} D_{i,j}\tr{\Delta P_i \Delta P_j}  + \tr{\Delta\bigO[\rho_0]{\norm{\alpha\Delta}^2}}\\
    &\stackrel{(c)}{=} \alpha \sum_{i,j} D_{i,j}\tr{\Delta P_i \Delta P_j}  + \bigO[\rho_0]{\alpha^2}\\
    &= \alpha \eta(\rho_1\|\rho_0) + \bigO[\rho_0]{\alpha^2}\label{eq:f-prime-rel}
\end{align}
where $(a)$ follows from Lemma~\ref{lm:diff-tr}, $(b)$ follows from Lemma~\ref{lm:log-appx}, and $(c)$ follows since the norm of $\Delta = \rho_1-\rho_0$ is bounded for all density operators $\rho_0$ and $\rho_1$.
We then have
\begin{align}
    \abs{g(\alpha) - \frac{1}{2}\alpha^2 \eta(\rho_1\|\rho_0) }
    &\stackrel{(a)}{=} \abs{\int_0^\alpha \pr{g'(\beta) - \beta \eta(\rho_1\|\rho_0) }d\beta}\\
    &\leq \int_0^\alpha \abs{g'(\beta) - \beta \eta(\rho_1\|\rho_0)} d\beta \\
    &\stackrel{(b)}{=} \int_0^\alpha \bigO[\rho_0]{\beta^2} d\beta\\
    &= \bigO[\rho_0]{\alpha^3}
\end{align}
where $(a)$ follows from the fundamental theorem of calculus and $(b)$ follows from Eq.~\eqref{eq:f-prime-rel}.

\section{Proof of Lemma~\ref{lm:one-shot-achv}}
\label{sec:one-shot-achv}
We first recall a result from~\cite{Li_2016} on the optimal performance of discriminating multiple quantum states.
\begin{lemma}
\label{lm:mult-hyp-q}
Let $\set{\rho_\theta}_{\theta\in \Theta}$ be a finite family of density operators acting on a finite dimensional space. There exists a \ac{POVM} $\set{\Gamma_\theta}_{\theta\in \Theta}$ such that 
\begin{align}
    \max_{\theta\in \Theta} \tr{\rho_\theta \pr{\one - \Gamma_\theta}} \leq 10 (\card{\Theta}-1)^2 \max_{\theta\in \Theta} \nu\pr{\rho_\theta} \sum_{\theta\neq \theta'} \inf_{s\in  [0, 1]}\tr{\rho_\theta^s \rho_{\theta'}^{1-s}}
\end{align}
\end{lemma}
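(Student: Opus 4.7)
The plan is to reduce the multiple-hypothesis test to pairwise binary tests, invoke the quantum Chernoff bound of Audenaert \emph{et al.} for each pair, and glue the pairwise tests into a single POVM via a pinching argument combined with the Hayashi-Nagaoka inequality. Pinching is the crucial ingredient: it makes non-commuting operators commute at the cost of a multiplicative factor $\nu(\rho_\theta)$, which is precisely why that quantity appears in the bound, and it enables classical multi-hypothesis arguments to be applied coherently.

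First, for each ordered pair $(\theta, \theta')$ with $\theta \neq \theta'$, I would invoke the quantum Chernoff bound: the Holevo-Helstrom projector $Q_{\theta, \theta'} \eqdef \indic{\rho_\theta \geq \rho_{\theta'}}$ satisfies $\max\set{\tr{\rho_\theta \pr{\one - Q_{\theta, \theta'}}}, \tr{\rho_{\theta'} Q_{\theta, \theta'}}} \leq \inf_{s \in [0,1]} \tr{\rho_\theta^s \rho_{\theta'}^{1-s}}$. To combine the projectors $\set{Q_{\theta, \theta'}}_{\theta' \neq \theta}$ coherently for a fixed $\theta$, I would apply the pinching map $\mathcal{P}_\theta(X) \eqdef \sum_i \pi_i X \pi_i$, where $\set{\pi_i}$ are the spectral projectors of $\rho_\theta$. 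Hayashi's pinching inequality yields $\rho_{\theta'} \leq \nu(\rho_\theta)\, \mathcal{P}_\theta(\rho_{\theta'})$, and the pinched operators commute with $\rho_\theta$ by construction, so the family $\set{\rho_\theta} \cup \set{\mathcal{P}_\theta(\rho_{\theta'})}_{\theta' \neq \theta}$ can be simultaneously diagonalized and handled with classical multiple-hypothesis arguments.

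Next, I would build for each $\theta$ a test operator $P_\theta$ in the common eigenbasis of the pinched operators, for instance as the projector onto the region where $\rho_\theta$ dominates every $\mathcal{P}_\theta(\rho_{\theta'})$. A classical union bound in this commuting picture, paired with the pinching inequality, gives $\tr{\rho_\theta \pr{\one - P_\theta}} \leq \nu(\rho_\theta) \sum_{\theta' \neq \theta} \inf_{s \in [0,1]} \tr{\rho_\theta^s \rho_{\theta'}^{1-s}}$. I would then convert the family $\set{P_\theta}$ into a valid POVM $\set{\Gamma_\theta}$ via the square-root (pretty good measurement) construction and apply the Hayashi-Nagaoka inequality $\one - \Gamma_\theta \leq 2\pr{\one - P_\theta} + 4 \sum_{\theta'' \neq \theta} P_{\theta''}$. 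The first term contributes $2\nu(\rho_\theta)\sum_{\theta' \neq \theta}\inf_s \tr{\rho_\theta^s \rho_{\theta'}^{1-s}}$, while the second, after bounding $\tr{\rho_\theta P_{\theta''}}$ by a pairwise Chernoff quantity using the same pinching technique, picks up an additional factor $\card{\Theta}-1$ from the outer summation over $\theta''$.

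The main obstacle is the non-commutativity of the projectors $Q_{\theta, \theta'}$ across different $\theta'$, which prevents a naive iteration of the binary argument; pinching resolves this at the cost of the $\nu(\rho_\theta)$ penalty, and Hayashi-Nagaoka provides the bridge from pairwise operator inequalities to a genuine POVM error bound. The remaining step is careful bookkeeping of the numerical constants (the factors $2$ and $4$ from Hayashi-Nagaoka, the $\nu(\rho_\theta)$ from a single pinching application, and the two combinatorial factors of $\card{\Theta}-1$ from summing over pairs in each step), which yields the stated prefactor $10\pr{\card{\Theta}-1}^2 \max_\theta \nu(\rho_\theta)$. Optimizing the universal constant $10$ is routine once the structural chain of inequalities is in place.
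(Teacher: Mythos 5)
The paper offers no proof of its own here: it simply cites Li's \emph{Discriminating quantum states: the multiple Chernoff distance} (combining his Theorem~2 with his Eq.~(35)). Your attempt to reconstruct the argument from scratch picks out the right ingredients -- the binary quantum Chernoff bound of Audenaert \emph{et al.}, Hayashi's pinching inequality (which is indeed the source of the $\nu(\rho_\theta)$ factor), and the Hayashi--Nagaoka operator inequality together with the square-root measurement -- and the accounting that produces a polynomial-in-$\card{\Theta}$ prefactor is the right kind of bookkeeping.

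There is, however, a genuine gap at the central step. You assert that after pinching with respect to the spectral projectors of $\rho_\theta$, the family $\set{\rho_\theta} \cup \set{\mathcal{P}_\theta(\rho_{\theta'})}_{\theta' \neq \theta}$ can be simultaneously diagonalized, and you then define $P_\theta$ as "the projector onto the region where $\rho_\theta$ dominates every $\mathcal{P}_\theta(\rho_{\theta'})$." This is not available. Each pinched operator $\mathcal{P}_\theta(\rho_{\theta'}) = \sum_i \Pi_i \rho_{\theta'} \Pi_i$ does commute with $\rho_\theta$ (it is block-diagonal with respect to the eigenspaces $\Pi_i$), but within a given block $\Pi_i$ the restrictions $\Pi_i \rho_{\theta'_1} \Pi_i$ and $\Pi_i \rho_{\theta'_2} \Pi_i$ are arbitrary positive operators and need not commute with each other. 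Hence there is no common eigenbasis, the Helstrom projectors $\{\rho_\theta \geq \mathcal{P}_\theta(\rho_{\theta'})\}$ for different $\theta'$ do not commute, the "region where $\rho_\theta$ dominates every $\mathcal{P}_\theta(\rho_{\theta'})$" is not a well-defined projector, and the subsequent "classical union bound in this commuting picture" does not apply. So the very obstruction you flag at the end -- non-commutativity across $\theta'$ -- is not removed by a single pinching with respect to $\rho_\theta$, contrary to what the sketch claims. A correct combination step requires something more (for instance, replacing the product of the pairwise tests by the spectral projector $\indic{\sum_{\theta' \neq \theta}(\one - T_{\theta,\theta'}) \leq \rfrac{1}{2}}$, which commutes with $\rho_\theta$ since each summand does, and then invoking a Markov-type argument before Hayashi--Nagaoka; or using the specific one-shot construction Li actually gives). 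As written, the sketch does not go through, and the constant $10\pr{\card{\Theta}-1}^2$ cannot be recovered from a broken chain.
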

\begin{proof}
It follows from combining~\cite[Th.~2]{Li_2016} and~\cite[Eq.~(35)]{Li_2016}.
\end{proof}
Alice samples the input sequence $\mathbf{u}$ according to $P_{\mathbf{U}}$ and receives $\rho_{\mathbf{B}|\theta}^{\mathbf{u}} \eqdef \rho_{B|\theta}^{u_1} \otimes \cdots \otimes \rho_{B|\theta}^{u_n}$. Alice then performs the POVM $\set{\Gamma^{\mathbf{u}}_{\theta}: \theta \in \Theta}$ given by Lemma~\ref{lm:mult-hyp-q} for the states $\set{\qpy{\mathbf{u}}}_{\theta\in \Theta}$, resulting in estimation error 
\begin{align}
    10 (\card{\Theta}-1)^2 \max_{\theta\in \Theta} \nu\pr{\qpy{\mathbf{u}}}  \sum_{\theta\neq \theta'} \inf_{s\in  [0, 1]}\tr{\pr{\qpy{\mathbf{u}}}^s \pr{\qpyp{\mathbf{u}}}^{1-s}}.
\end{align}
Note that 
\begin{align}
    \max_{\theta\in \Theta} \nu\pr{\qpy{\mathbf{u}}} \leq (n+1)^{\dim B \card{\calU}},
\end{align}
and
\begin{align}
    \inf_{s\in  [0, 1]}\tr{\pr{\qpy{\mathbf{u}}}^s \pr{\qpyp{\mathbf{u}}}^{1-s}}
    &=  \inf_{s\in  [0, 1]}\prod_{i=1}^n \tr{\pr{\qy{{u_i}}}^s \pr{\qyp{{u_i}}}^{1-s}}\\
    &= \exp\pr{-\sup_{s\in [0, 1]} - \sum_{i=1}^n \log\pr{ \tr{\pr{\qy{{u_i}}}^s \pr{\qyp{{u_i}}}^{1-s}}} }\\
    &= \exp\pr{-n \sup_{s\in [0, 1]} - \sum_{u\in \calU}T_{\mathbf{u}}(u) \log \pr{\tr{\pr{\qy{{u}}}^s \pr{\qyp{{u}}}^{1-s}} }}\\
    &= \exp\pr{-n \Dcc{\theta}{\theta'}{T_{\mathbf{u}}}}.
\end{align}
This concludes the proof.

\section{Proof of Lemma~\ref{lm:spec-pu-prop}}
\label{sec:spec-pu-prop}
We introduce a notation that simplifies our expressions. Let us define for $u\in \calU$ and $s\in[0, 1]$,
\begin{align}
f(s;u) \eqdef \log\pr{\tr{\pr{\qz{u}}^s \pr{\qz{u}}^{1-s}}},
\end{align}
which is always non-positive and
\begin{align}
    L_f \eqdef \min_{u'\in\calU \setminus \set{0}} \min_{s'\in[0, 1]} f(s';u') > -\infty.
\end{align}
We then have
\begin{align}
    \log\pr{\sum_{Q \in \calP_{n}(\calU)}P_{\mathbf{U}}(\calT_{Q}) \exp\pr{-n\Dcc{\theta}{\theta'}{Q}}}\label{eq:up-b-log-beg}
    &\leq \max_{Q\in\calP_{n}(\calU): P_{\mathbf{U}}(\calT_Q) > 0  }\left[ -n\Dcc{\theta}{\theta'}{Q}\right]\\
    &= \max_{Q\in\calP_{n}(\calU): P_{\mathbf{U}}(\calT_Q) > 0  }\left[ -n\sup_{s\in [0, 1]}-\sum_{u\in \calU}Q(u)f(s;u) \right]\\
    &\stackrel{(a)}{\leq}  \max_{Q\in\calP_{n}(\calU): P_{\mathbf{U}}(\calT_Q) > 0  }\left[ -n\sup_{s\in [0, 1]}-\sum_{u\in \calU \setminus \set{0}}Q(u)f(s;u) \right]\\
    &\stackrel{(b)}{\leq}  -n\sup_{s\in [0, 1]}-\sum_{u\in \calU \setminus \set{0}}\pr{P(u) - \zeta \alpha}f(s;u)\\
    &\leq \Dcc{\theta}{\theta'}{P} - n\alpha\zeta \card{\calU} L_f,\label{eq:up-b-log-end}
\end{align}
where $(a)$ follows since $f(s;0) \leq 0$, and $(b)$ follows since $P_{\mathbf{U}}(\calT_Q) > 0$ if and only if $|P(u) - Q(u)| \leq \alpha\zeta $ for all $u\in \calU\setminus\set{0}$ (see Eq.~\eqref{eq:pun-def1}-\eqref{eq:pun-def}).
Note also that $Q(0) = 1 - \sum_{u\in \calU\setminus\set{0}} Q(u) \geq 1 - \sum_{u\in \calU\setminus\set{0}} (P(u) + \alpha \zeta  ) \geq P(0) - \alpha \zeta \card{\calU} = 1 - \alpha(1+\zeta\card{\calU})$ for all $Q \in \calQ$. The same line of reasoning as in Eq.~\eqref{eq:up-b-log-beg}-\eqref{eq:up-b-log-end} then provides that
\begin{align}
   \log\pr{\sum_{Q \in \calP_{n}(\calU)}P_{\mathbf{U}}(\calT_{Q}) \exp\pr{-n\Dcc{\theta}{\theta'}{Q}}}
    &\leq -n(1 - \alpha(1+\zeta\card{\calU})) \inf_{s\in[0, 1]} f(s, 0),
\end{align}
which yields Eq.~\eqref{eq:est-err-1} together with Eq.~\eqref{eq:up-b-log-end}.

Let $\mathbf{U}$ be distributed according to $P^\pn$ and $T_{\mathbf{U}}$ denote its type, which is a random element of $\calP_n(\calU)$. We have
\begin{align}
1- P^\pn(\calA) 
&\stackrel{(a)}{\leq} \sum_{u\in \calU \setminus \set{0}} \P{ |T_{\mathbf{U}}(u) - P(u)| \geq \zeta \alpha }\\
&\stackrel{(b)}{=} \sum_{u\in \calU \setminus \set{0}: P(u) > 0} \P{ |T_{\mathbf{U}}(u) - P(u)| \geq \zeta \alpha }\\
&\stackrel{(c)}{\leq} \sum_{u\in \calU \setminus \set{0}: P(u) > 0}  2\exp\pr{-\frac{\alpha^2 n \zeta^2}{3P(u)}}\\
&\stackrel{(d)}{\leq}  2 \card{\calU}\exp\pr{-\frac{\alpha n \zeta^2}{3}},
\end{align} 
where $(a)$ follows from the union bound, $(b)$ follows since $T_{\mathbf{U}}(u) = 0$ with probability one when $P(u) = 0$, $(c)$ follows from a Chernoff bound, and $(d)$ follows since $P(u) \leq 1 - P(0) = \alpha$ for all $u\in \calU\setminus \set{0}$. Note that $\frac{1}{2}{\norm{P_{\mathbf{U}} - P^\pn}}_1 = 1- P^\pn(\calA) \leq  2 \card{\calU}\exp\pr{-\frac{\alpha n \zeta^2}{3}}$ by the definition of $P_{\mathbf{U}}$. Hence, the data processing inequality implies that

\begin{align}
\frac{1}{2}{\norm{\sum_{\mathbf{u}}P_{\mathbf{U}}(\mathbf{u})\qpz{\mathbf{u}} - \sum_{\mathbf{u}}P^\pn(\mathbf{u})\qpz{\mathbf{u}}}}_1 \leq  2 \card{\calU}\exp\pr{-\frac{\alpha n \zeta^2}{3}}.
\end{align}

Finally, the following  continuity result  for the relative entropy completes the proof of \eqref{eq:covertness-pu}. 
\begin{lemma}
\label{lm:rel-cont}
Let $\rho_{\mathbf{B}}$ and $\sigma_{\bfB}$ be two density operators over $B^n$ such that $\frac{1}{2}\nrm{\rho_{\bfB} - \sigma_{\bfB}}_1 \leq \epsilon$.  We then have
\begin{align}
\abs{\D{\rho_{\bfB}}{\qpz{\mathbf{0}}} - \D{\sigma_{\bfB}}{\qpz{\mathbf{0}}}} \leq  \epsilon \log\pr{\frac{\dim B}{\pr{\lambda_{\min}(\qz{0})}^2}}n  + \Hb{\epsilon}.
\end{align}
\end{lemma}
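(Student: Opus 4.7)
The plan is to split the relative entropy into an entropy term and a cross term, then bound each separately using standard continuity and operator-norm tools. Using $\D{\rho}{\sigma}=-H(\rho)-\tr{\rho\log\sigma}$ together with the linearity of the trace, I would write
\begin{align}
\D{\rho_{\bfB}}{\qpz{\mathbf{0}}}-\D{\sigma_{\bfB}}{\qpz{\mathbf{0}}}
=\bigl(H(\sigma_{\bfB})-H(\rho_{\bfB})\bigr)+\tr{(\sigma_{\bfB}-\rho_{\bfB})\log \qpz{\mathbf{0}}},
\end{align}
so by the triangle inequality the lemma will follow from two separate estimates whose sum is exactly $\epsilon\log\bigl(\dim B/\lambda_{\min}(\qz{0})^2\bigr)n+\Hb{\epsilon}$.

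For the entropy difference, I would invoke the Audenaert sharpening of the Fannes inequality on the Hilbert space $B^{\otimes n}$ of dimension $d=(\dim B)^n$: whenever $\tfrac{1}{2}\nrm{\rho_{\bfB}-\sigma_{\bfB}}_1\leq \epsilon$, one has $|H(\rho_{\bfB})-H(\sigma_{\bfB})|\leq \epsilon\log(d-1)+\Hb{\epsilon}\leq \epsilon\, n\log(\dim B)+\Hb{\epsilon}$. (If $\epsilon$ exceeds the validity range of Audenaert's bound the trivial $|H(\rho)-H(\sigma)|\leq\log d$ absorbs the missing cases.) This contributes the $\epsilon n\log(\dim B)$ piece plus the $\Hb{\epsilon}$ term.

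For the cross term, the key observation is that $\qpz{\mathbf{0}}=\qz{0}^{\proddist n}$, so $\log \qpz{\mathbf{0}}=\sum_{i=1}^{n}(\log\qz{0})_i$ and hence $\nrm{\log\qpz{\mathbf{0}}}_\infty=n\nrm{\log\qz{0}}_\infty=-n\log\lambda_{\min}(\qz{0})$, using that the eigenvalues of $\qz{0}$ lie in $(0,1]$. Combining the matrix Hölder inequality $|\tr{XY}|\leq\nrm{X}_1\nrm{Y}_\infty$ with the trace-distance hypothesis $\nrm{\sigma_{\bfB}-\rho_{\bfB}}_1\leq 2\epsilon$ yields
\begin{align}
\bigl|\tr{(\sigma_{\bfB}-\rho_{\bfB})\log\qpz{\mathbf{0}}}\bigr|\leq 2\epsilon\, n\log\bigl(1/\lambda_{\min}(\qz{0})\bigr),
\end{align}
which accounts for the factor of $2$ in $\lambda_{\min}(\qz{0})^2$. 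Summing the two bounds and using $\log(\dim B)+2\log(1/\lambda_{\min}(\qz{0}))=\log\bigl(\dim B/\lambda_{\min}(\qz{0})^2\bigr)$ produces the claimed inequality.

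There is no substantive obstacle here; the only care needed is in justifying the spectral-norm computation of $\log\qpz{\mathbf{0}}$ (which relies on $\qz{0}$ being invertible, guaranteed by the covertness setup where relative entropies are finite) and in noting that Audenaert's bound can be replaced by its crude version when $\epsilon$ is large, without affecting the stated bound.
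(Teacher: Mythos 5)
Your proposal is correct and follows essentially the same route as the paper: the same decomposition into entropy difference plus cross term, Fannes/Audenaert continuity for the entropy part (the paper cites it as ``Fannes' inequality'' but the $\Hb{\epsilon}$ term indicates the same Audenaert-type bound), and the trace--operator-norm (H\"older) estimate with $\nrm{\log\qpz{\mathbf{0}}}=n\log(1/\lambda_{\min}(\qz{0}))$ for the cross term.
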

\begin{proof}
Note that
\begin{align}
\displaybreak[0]\abs{\D{\rho_{\mathbf{B}}}{\qpz{\mathbf{0}}} - \D{\sigma_{\mathbf{B}}}{\qpz{\mathbf{0}}}}
&= \abs{H(\rho_{\mathbf{B}}) - H(\sigma_{\mathbf{B}}) + \tr{\pr{\rho_{\bfB} - \sigma_{\bfB}}\log\pr{\qpz{\mathbf{0}}} }}\\
\displaybreak[0]&\leq \abs{H(\rho_{\mathbf{B}}) - H(\sigma_{\mathbf{B}}) } +  \abs{\tr{\pr{\rho_{\bfB} - \sigma_{\bfB}}\log\pr{\qpz{\mathbf{0}}} }}\\
\displaybreak[0]&\stackrel{(a)}{\leq} \epsilon \log(\dim B ) n + \Hb{\epsilon}+\abs{\tr{\pr{\rho_{\bfB} - \sigma_{\bfB}}\log\pr{\qpz{\mathbf{0}}} }}\\
\displaybreak[0]&\leq \epsilon \log( \dim B) n + \Hb{\epsilon}+ \nrm{\rho_{\bfB} - \sigma_{\bfB}}_1 \nrm{\log \pr{\qpz{\mathbf{0}}}} \\
&\leq \epsilon \log( \dim B) n + \Hb{\epsilon}+ 2\epsilon  \log \pr{\frac{1}{\lambda_{\min}(\qz{0})}} n
\end{align}
where $(a)$ follows from  Fannes' inequality.
\end{proof}
\section{Proof of Lemma~\ref{lm:one-shot-converse}}
\label{sec:one-shot-converse}
Eq.~\eqref{eq:one-shot-converse2} follows from the same argument used to obtain~\cite[Eq.~(39)]{Wang2016c}, except using Lemma~\ref{lm:approx-rel-entropy} instead of~\cite[Lemma~1]{Wang2016c}.
We prove Eq.~\eqref{eq:one-shot-converse1} in four steps as summarized below.
\begin{itemize}
    \item \textbf{Step 1:} We lower-bound the estimation error of this strategy by
    \begin{align}
    \max_{\theta \in \Theta} \sum_{\bfu} P_{\bfU}(\bfu) \tr{\Gamma^{\bfu}_\theta\qpy{\bfu}} \geq \frac{1}{\card{\Theta}}\max_{\theta \neq \theta'} \sum_{\bfu}P_{\bfU}(\bfu) \pr{1 - \frac{1}{2}\nrm{\qpy{\bfu} - \qpyp{\bfu}}_1}.
    \end{align}
    \item \textbf{Step 2:} Let us now consider the spectral decomposition of $\qy{u} = \sum_{y\in \calY} p_{\theta}^u(y) \kb{e_{\theta}^u(y)}$, where $\calY$ is a set of size $\dim B$, $p_{\theta}^u$ is a \ac{PMF} over $\calY$, and $\set{\ket{e_{\theta}^u(y)}: y\in \calY}$ forms an orthonormal basis for $B$. We also define $q_{\theta, \theta'}^u(y, y') \eqdef p_{\theta}^u(y)\abs{\braket{e_{\theta}^u(y)}{e_{\theta'}^u(y')}}^2 $, which is a \ac{PMF} over $\calY \times \calY$, and $q_{\theta}^{\bfu} \eqdef q_{\theta}^{u_1} \otimes \cdots \otimes q_{\theta}^{u_n}$, which is a \ac{PMF} over $(\calY \times \calY)^n$. We shall show that
    \begin{align}
    1 - \frac{1}{2}\nrm{\qpy{\bfu} - \qpyp{\bfu}}_1 \geq \frac{1}{2}\pr{ 1 - \frac{1}{2}\nrm{q_{\theta, \theta'}^{\bfu} - q_{\theta', \theta}^{\bfu}}_1}.
\end{align}
\item \textbf{Step 3:} Let $P_{V|U}$ and $P_{\widetilde{V}|U }$ be two conditional distributions, $\bfu \in \calU^n$ be a sequence with type $T_U$, and $\bfV$ and $\widetilde{\bfV}$ be distributed according to $P_{\bfV} = P_{V|U = u_1} \otimes \cdots \otimes P_{V|U=u_n}$ and $P_{\widetilde{\bfV}} = P_{\widetilde{V}|U = u_1} \otimes \cdots \otimes P_{\widetilde{V}|U=u_n}$, respectively. We shall show that
\begin{multline}
    \sum_{\bfv } \min\pr{P_{\bfV}(\bfv), P_{\widetilde{\bfV}}(\bfv)} \\\geq \exp\pr{-n\sup_{s\in [0, 1]} \log\pr{\sum_u T_U(u) \sum_v P_{V|U}(v|u)^s P_{\widetilde{V}|U}(v|u)^{1-s}} + \bigO{\frac{\log n}{n}}}.
\end{multline}
\item \textbf{Step 4:} We show that 
    \begin{align}
        1 - \frac{1}{2}\nrm{\qpy{\bfu} - \qpyp{\bfu}}_1  
        &= \exp\pr{-n\Dcc{\theta}{\theta}{T_U} + \bigO{\frac{\log n}{n}}},
    \end{align}
    which concludes the proof together with Jensen's inequality and the convexity of the exponential function.
\end{itemize}
We now provide the detailed proof of each step.
\paragraph{Proof of step 1}
Note that 
\begin{align}
    \max_{\theta \in \Theta} \sum_{\bfu} P_{\bfU}(\bfu) \tr{\Gamma^{\bfu}_\theta\qpy{\bfu}}
    &\geq  \frac{1}{\card{\Theta}} \sum_{\bfu}P_{\bfU}(\bfu) \sum_{\theta}  \tr{\Gamma^{\bfu}_\theta\qpy{\bfu}}\\
    &\stackrel{(a)}{\geq}  \frac{1}{\card{\Theta}} \sum_{\bfu}P_{\bfU}(\bfu) \max_{\theta \neq \theta'}\pr{1 - \frac{1}{2}\nrm{\qpy{\bfu} - \qpyp{\bfu}}_1} \\
    &\geq \frac{1}{\card{\Theta}}\max_{\theta \neq \theta'} \sum_{\bfu}P_{\bfU}(\bfu) \pr{1 - \frac{1}{2}\nrm{\qpy{\bfu} - \qpyp{\bfu}}_1},
\end{align}
where $(a)$ follows from the varitional characterization of the trace norm $\frac{1}{2} \norm[1]{\rho - \sigma} = \max_{0 \prec \Gamma \prec \one} \tr{\Gamma (\rho - \sigma)}$.

\paragraph{Proof of step 2}
The proof is in~\cite{Nussbaum_Szkola_2009}, but we provide the proof for completeness. We first define $p_\theta^{\bfu} \eqdef p_\theta^{u_1}\otimes \cdots \otimes p_{\theta}^{u_n}$ and $\ket{e_\theta^{\bfu}(\bfy)} \eqdef \ket{e_\theta^{u_1}(y_1)} \otimes \cdots \otimes \ket{e_\theta^{u_n}(y_n)}$. We have
\begin{align}
    1 - \frac{1}{2}\nrm{\qpy{\bfu} - \qpyp{\bfu}}_1  
    &= \inf_{0 \prec \Gamma \prec \one} \left[\tr{\Gamma\qpy{\bfu}} + \tr{(\one - \Gamma) \qpyp{\bfu}}\right]\\
    &=  \inf_{0 \prec \Gamma \prec \one} \left[ \sum_{\bfy} p_{\theta}^{\bfu}(\bfy) \bra{e_{\theta}^{\bfu}(\bfy)} \Gamma \ket{e_{\theta}^{\bfu}(\bfy)} +\sum_{\bfy} p_{\theta'}^{\bfu}(\bfy) \bra{e_{\theta'}^{\bfu}(\bfy)} (\one - \Gamma) \ket{e_{\theta'}^{\bfu}(\bfy)}  \right]\\
    &\stackrel{(a)}{\geq}  \inf_{0 \prec \Gamma \prec \one} \left[ \sum_{\bfy} p_{\theta}^{\bfu}(\bfy) \nrm{\Gamma \ket{e_{\theta}^{\bfu}(\bfy)}}^2_2 + \sum_{\bfy} p_{\theta'}^{\bfu}(\bfy) \nrm{ (\one - \Gamma) \ket{e_{\theta'}^{\bfu}(\bfy)}}_2^2  \right]\\
    &=  \inf_{0 \prec \Gamma \prec \one} \left[ \sum_{\bfy, \bfy'} p_{\theta}^{\bfu}(\bfy) \abs{\bra{e_{\theta'}^{\bfu}(\bfy')} \Gamma \ket{e_{\theta}^{\bfu}(\bfy)}}^2  + \sum_{\bfy, \bfy'} p_{\theta'}^{\bfu}(\bfy)  \abs{\bra{e_{\theta'}^{\bfu}(\bfy)} (\one - \Gamma) \ket{e_{\theta}^{\bfu}(\bfy')}}^2 \right]\\
    &\geq \inf_{0 \prec \Gamma \prec \one} \sum_{\bfy, \bfy'} \min(p_{\theta}^{\bfu}(\bfy), p_{\theta'}^{\bfu}(\bfy')) \pr{ \abs{\bra{e_{\theta'}^{\bfu}(\bfy')} \Gamma \ket{e_{\theta}^{\bfu}(\bfy)}}^2  +   \abs{\bra{e_{\theta'}^{\bfu}(\bfy')} (\one - \Gamma) \ket{e_{\theta}^{\bfu}(\bfy)}}^2 }\\
    &\stackrel{(b)}{\geq} \frac{1}{2}  \sum_{\bfy, \bfy'} \min(p_{\theta}^{\bfu}(\bfy), p_{\theta'}^{\bfu}(\bfy')) \abs{\braket{e_\theta^{\bfu}(\bfy)}{e_{\theta'}^{\bfu}(\bfy')}}^2\displaybreak[0],
\end{align}
where $(a)$ follows since $(1-\Gamma) \succ (1-\Gamma)^2$ for all $0\prec \Gamma \prec \one$, and $(b)$ follows from $|x|^2+|y|^2 \geq |x+y|^2/2$ for any two complex numbers $x$ and $y$.

\paragraph{Proof of Step 3}
Deploying standard method of type arguments, we have
\begin{align}
  &\sum_{\bfv } \min\pr{P_{\bfV}(\bfv), P_{\widetilde{\bfV}}(\bfv)} \\
  &\geq \max_{T_{V|U}\in \calP_n(\calV|\bfu)} \sum_{\bfv \in \calT_{T_{V|U}}(\bfu) } \min\pr{P_{\bfV}(\bfv), P_{\widetilde{\bfV}}(\bfv)} \\
  &\stackrel{(a)}{\geq} \max_{T_{V|U}\in \calP_n(\calV|\bfu)} \pr{n+1}^{-\card{\calV}\card{\calU}}\exp\pr{-n\max\pr{\D{T_{V|U}}{P_{V|U}|T_U}, \D{T_{V|U}}{P_{\widetilde{V}|U}|T_U}}}\\
  &= \pr{n+1}^{-\card{\calV}\card{\calU}}\exp\pr{-n \min_{T_{V|U}\in \calP_n(\calV|\bfu)}\max\pr{\D{T_{V|U}}{P_{V|U}|T_U}, \D{T_{V|U}}{P_{\widetilde{V}|U}|T_U}}},
\end{align}
where $(a)$ follows from~\cite[Eq. (2.8)]{Csiszar}. Next note that for an arbitrary conditional distribution $Q_{V|U}$, there exists $T_{V|U}\in \calP_n(\calV|\bfu)$ such that $\Delta_u \eqdef \frac{1}{2}\nrm{T_{V|U = u} - Q_{V|U=u}}_1 \leq \frac{\card{\calV}}{nT_U(u)}$ for all $u\in \supp{T_U}$. Thus, for such a $T_{V|U}$,
\begin{align}
    &\abs{\D{T_{V|U}}{P_{\widetilde{V}|U}|T_U} - \D{Q_{V|U}}{P_{\widetilde{V}|U}|T_U}}\\
    &\leq   \sum_{u} T_U(u)  \abs{\D{T_{V|U=u}}{P_{\widetilde{V}|U=u}} - \D{Q_{V|U=u}}{P_{\widetilde{V}|U=u}}}\\
    &\leq   \sum_{u} T_U(u)\pr{\abs{H(T_{V|U=u}) - H(Q_{V|U=u})} + \sum_v \abs{T_{V|U}(v|u) - Q_{v|u}(v|u)}\log \frac{1}{P_{V|U}(v|u)}} \\
    &\stackrel{(a)}{\leq} \sum_{u} T_U(u)\pr{\Delta_u \log \card{\calV} + \Hb{\Delta_u} + \max_{v}\log \frac{1}{P_{V|U}(v|u)} \Delta_u} \\
     &\stackrel{(b)}{\leq} \sum_{u} T_U(u)\pr{\Delta_u \log \card{\calV} + \Delta_u \log \frac{e}{\Delta_u} + \max_{v}\log \frac{1}{P_{V|U}(v|u)} \Delta_u}\\
    &\stackrel{(c)}{\leq }\sum_{u} T_U(u)\pr{\frac{\card{\calV}}{nT_U(u)} \log \card{\calV} + \frac{\card{\calV}}{nT_U(u)} \log \frac{nT_U(u) e}{\card{\calV}} + \max_{v}\log \frac{1}{P_{V|U}(v|u)} \frac{\card{\calV}}{nT_U(u)}}\\
    &\leq \frac{\card{\calV} \card{\calU} \log \card{\calV} }{n}+\frac{\card{\calV} \card{\calU} \log n }{n} + \max_{v, n} \log \frac{1}{P_{V|U}} \frac{\card{\calV}}{n} = \bigO{\frac{\log n}{n}},
    \end{align}
where $(a)$ follows from Fannes' inequality, $(b)$ follows since $\Hb{x} \leq x \log \frac{e}{x}$, and $(c)$ follows since $\Delta_u\leq \frac{\card{\calV}}{nT_U(u)}$ by our choice of $T_{V|U}$.
Hence,
\begin{multline}
    \min_{T_{V|U}\in \calP_n(\calV|\bfu)}\max\pr{\D{T_{V|U}}{P_{V|U}|T_U}, \D{T_{V|U}}{P_{\widetilde{V}|U}|T_U}}\\ \geq \min_{Q_{V|U}\in \calP(\calV|\calU)}\max\pr{\D{Q_{V|U}}{P_{V|U}|T_U}, \D{Q_{V|U}}{P_{\widetilde{V}|U}|T_U}} - \bigO{\frac{\log n}{n} }.
\end{multline}
Finally,~\cite[Eq.~(39)]{Nitinawarat2013} implies that 
\begin{multline}
    \min_{Q_{V|U}\in \calP(\calV|\calU)}\max\pr{\D{Q_{V|U}}{P_{V|U}|T_U}, \D{Q_{V|U}}{P_{\widetilde{V}|U}|T_U}} \\= - \sup_{s\in [0, 1]} \log\pr{\sum_u T_U(u) \sum_v P_{V|U}(v|u)^s P_{\widetilde{V}|U}(v|u)^{1-s}}.
\end{multline}

\paragraph{Proof of Step 4} Combining the result of step one and two, we have
\begin{multline}
    1 - \frac{1}{2}\nrm{\qpy{\bfu} - \qpyp{\bfu}}_1 \geq \exp\left(-n\sup_{s\in [0, 1]} \log\left(\sum_u T_U(u)\right.\right. \\
    \left.\left.\times \sum_{y, y'} \left(p_\theta^u(y)\abs{\braket{e_\theta^u(y)}{e_{\theta'}^u(y')}}^2\right)^s \pr{p_{\theta'}^u(y') \abs{\braket{e_\theta^u(y)}{e_{\theta'}^u(y')}}^2}^{1-s}\right) + \bigO{\frac{\log n}{n}}\right).\label{eq:l1-decom}
\end{multline}
Note that 
\begin{align}
    &\sum_{y, y'} (p_\theta^u(y) \abs{\braket{e_\theta^u(y)}{e_{\theta'}^u(y')}}^2)^s \pr{p_{\theta'}^u(y') \abs{\braket{e_\theta^u(y)}{e_{\theta'}^u(y')}}^2}^{1-s}\\
    &~~~~~~~~~~~= \sum_{y, y'} p_\theta^u(y)^s p_{\theta}^u(y')^{1-s} \abs{\braket{e_\theta^u(y)}{e_{\theta'}^u(y')}}^2\\
    &~~~~~~~~~~~= \tr{\pr{\sum_y p_\theta^u(y)^s \kb{e_\theta^u(y)}}\pr{\sum_{y'} p_{\theta'}^u(y')^{1-s} \kb{e_{\theta'}^u(y')}}}\\
    &~~~~~~~~~~~= \tr{\pr{\qy{\theta}}^s \pr{\qy{\theta'}}^{1-s}}.\label{eq:tr-cher}
\end{align}
Substituting Eq.~\eqref{eq:tr-cher} into Eq.~\ref{eq:l1-decom}, we have
\begin{align}
    1 - \frac{1}{2}\nrm{\qpy{\bfu} - \qpyp{\bfu}}_1  
    &\geq \exp\pr{-n\sup_{s\in [0, 1]} \log\pr{\sum_u T_U(u) \tr{\pr{\qy{\theta}}^s \pr{\qy{\theta'}}^{1-s}}} + \bigO{\frac{\log n}{n}}}\\
    &= \exp\pr{-n\Dcc{\theta}{\theta}{T_U} + \bigO{\frac{\log n}{n}}},
\end{align}
as desired.

\section{Proof of Lemma~\ref{lm:cont-q-ch}}
\label{sec:cont-q-ch}
We divide the proof into four steps.
 \paragraph{Step 1} By the observation $\nrm{X}_2 \leq  \nrm{X}_1  \leq \sqrt{\dim A} \nrm{X}_2$ for all $X\in \calL(A)$, it holds that
\begin{align}
\sup_{\rho  \in \calD(A) \setminus \set{\kb{0}}} \frac{\nrm{\rho - \kb{0}}_1}{\nrm{\calE(\rho) - \calE(\kb{0})}_1} < \infty
\end{align}
if and only if
\begin{align}
\label{eq:bound-norm-2}
\sup_{\rho  \in \calD(A) \setminus \set{\kb{0}}} \frac{\nrm{\rho - \kb{0}}_2}{\nrm{\calE(\rho) - \calE(\kb{0})}_2} < \infty.
\end{align}
 \paragraph{Step 2} We state a result that relates the norm of the output of a linear operator to the norm of the output of projection onto the kernel of the linear operator. This implies that one only needs to know $\ker \calE$  to verify \eqref{eq:bound-norm-2}.
 \begin{proposition}
\label{prop:bound-operator}
Let $V$ and $W$ be Hilbert spaces and $A:V\to W$ be a non-zero linear map. Let $P$ be the projection onto $\ker A$. There exist positive constants $B_1$ and $B_2$ such that for all $v\in V$,
\begin{align}
    B_1\nrm{A v} \leq \nrm{(\one_V - P) v} \leq B_2 \nrm{A v}. 
\end{align}
\end{proposition}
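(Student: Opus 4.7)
The plan is to exploit the orthogonal decomposition $V = \ker A \oplus (\ker A)^\perp$ and to observe that the action of $A$ on any $v \in V$ depends only on its component in $(\ker A)^\perp$, since $Pv \in \ker A$ gives $Av = A(\one_V - P)v$. The two inequalities are then morally statements about $\nrm{A u}$ being comparable to $\nrm{u}$ when $u$ is restricted to $(\ker A)^\perp$.

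For the left inequality, I would use the elementary operator-norm bound
\begin{align}
\nrm{Av} \;=\; \nrm{A(\one_V - P)v} \;\leq\; \nrm{A}_{\mathrm{op}}\,\nrm{(\one_V - P)v},
\end{align}
which gives $B_1 = 1/\nrm{A}_{\mathrm{op}}$; this is well-defined because $A$ is non-zero and finite-dimensional (as the relevant setting in this paper is finite-dimensional quantum systems, so $V,W$ are finite-dimensional Hilbert spaces).

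For the right inequality, I would consider the restriction $A|_{(\ker A)^\perp}$, which is injective by construction. The key step is to invoke the spectral theorem for $A^\dagger A$: its eigenvalues are the squared singular values of $A$, and the smallest \emph{non-zero} singular value $\sigma > 0$ satisfies $\nrm{A u} \geq \sigma\,\nrm{u}$ for every $u \in (\ker A)^\perp$. Applying this to $u = (\one_V - P)v \in (\ker A)^\perp$ and using $A(\one_V - P)v = Av$ yields $\nrm{(\one_V - P)v} \leq \nrm{Av}/\sigma$, so one may take $B_2 = 1/\sigma$.

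The main (mild) obstacle is the justification that $\sigma > 0$, i.e.\ that $A$ has a strictly positive smallest non-zero singular value. In finite dimensions this is immediate from the singular value decomposition applied to $A$: the set of singular values is finite, those corresponding to $(\ker A)^\perp$ are strictly positive, and hence their minimum is strictly positive. No further ingredients are needed, so the proposition reduces to a standard linear-algebraic fact that I would state and prove in three or four lines following the outline above.
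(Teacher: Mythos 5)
Your proof is correct, and it takes a slightly different concrete route from the paper's while resting on the same underlying idea. The paper forms the quotient $V/\ker A$ with the quotient norm $\nrm{\pi(v)} = \nrm{(\one_V - P)v}$, invokes the first isomorphism theorem to factor $A = \widetilde{A}\circ\pi$ with $\widetilde{A}$ a linear isomorphism onto the range of $A$, and then appeals to boundedness of $\widetilde{A}$ and $\widetilde{A}^{-1}$ (valid because the paper works in finite dimensions) to obtain $B_1 = 1/\nrm{\widetilde{A}}$ and $B_2 = \nrm{\widetilde{A}^{-1}}$. You instead work with the orthogonal complement $(\ker A)^\perp$, which in a Hilbert space is isometrically isomorphic to that quotient, bound $\nrm{Av}$ above by the operator norm for $B_1$, and obtain $B_2$ from the smallest non-zero singular value of $A$ via the SVD. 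The two arguments are essentially dual descriptions of the same fact---boundedness of $\widetilde{A}^{-1}$ is exactly positivity of the least non-zero singular value---but your version is more explicit about the constants and stays inside the Hilbert-space toolbox rather than passing through the general-linear-algebra language of quotients, which some readers may find more concrete. Both proofs correctly rely on finite-dimensionality (the proposition as stated would fail for, say, a compact injective operator with singular values tending to zero), and you flag this explicitly, which the paper leaves implicit.
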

\begin{proof}
See Appendix~\ref{sec:bound-operator}
\end{proof}
 
By  Proposition~\ref{prop:bound-operator}, we have
\begin{align}
\sup_{\rho  \in \calD(A) \setminus \set{\kb{0}}} \frac{\nrm{\rho - \kb{0}}_2}{\nrm{\calE(\rho) - \calE(\kb{0})}_2} < \infty
\end{align}
if and only if
\begin{align}
\sup_{\rho  \in \calD(A) \setminus \set{\kb{0}}} \frac{\nrm{\rho - \kb{0}}_2}{\nrm{(\id_A -P)(\rho - \kb{0})}_2} < \infty.
\end{align}
where $P$ is the projection onto $\ker \calE$.

 \paragraph{Step 3} It will be more convenient in the sequel to consider  linear operators acting on $A$ as points in $\R^{2d^2}$. We use the function $f$ defined in Definition~\ref{def:f}, for which we list here some useful  properties.
 \begin{proposition}
\label{prop:f-prob}
The function $f$ defined in Definition~\ref{def:f} satisfies the following properties.
\begin{enumerate}
    \item $f$ is bijective
    \item $f(aX +bY) = af(X) + bf(Y)$ for all $X, Y\in \calL(A)$ and for all $a, b\in \R$
    \item $\nrm{f(X)}_2 = \nrm{X}_2$ for all $X\in \calX$
    \item If $Q$ is a projection onto a linear subspace $E\subset \calL(A)$, then $f(E)$ is also a linear subspace of $\R^{2d^2}$ and $f(Q(X)) = Q'(f(X))$ where $Q'$ denotes the projection onto $f(E)$.
    \item If $X$ is a compact convex subset of $\calL(A)$, then $f(X)$ is a compact convex subset of $\R^{2d^2}$ and $\partial f(X) = f(\partial X)$ where $\partial f(X)$ and $\partial X$ denote the boundaries of $f(X)$ and $X$, respectively.
\end{enumerate}
\end{proposition}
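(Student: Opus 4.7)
The plan is to observe that all five properties are almost immediate consequences of the fact that $f$ is the real-linear isometric isomorphism induced by writing complex matrix entries as pairs of real numbers, so I would prove them in the order (2), (1), (3), (4), (5), each building on the previous.

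For property (2), linearity is immediate since for each fixed $(i,j)$ the maps $X \mapsto \textnormal{Re}(\bra{a_i} X\ket{a_j})$ and $X\mapsto \textnormal{Im}(\bra{a_i} X\ket{a_j})$ are $\R$-linear; hence each coordinate of $f$ is $\R$-linear. For property (1), note that $\calL(A)$ is a $\R$-vector space of dimension $2d^2$ (since $d^2$ complex entries each contribute two real degrees of freedom), and $\R^{2d^2}$ has the same real dimension, so by the rank-nullity theorem it suffices to prove injectivity; but $f(X)=0$ forces $\bra{a_i} X\ket{a_j}=0$ for all $i,j$, and since $\set{\ket{a_i}}$ is an orthonormal basis of $A$ this forces $X=0$. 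Property (3) is a direct expansion:
\begin{align}
\nrm{X}_2^2 = \tr{X^\dagger X} = \sum_{i,j} \abs{\bra{a_i} X\ket{a_j}}^2 = \sum_{i,j}\pr{\textnormal{Re}(\bra{a_i} X\ket{a_j})^2 + \textnormal{Im}(\bra{a_i} X\ket{a_j})^2} = \nrm{f(X)}_2^2.
\end{align}

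For property (4), the key observation is that $f$ preserves the real inner product, not only the norm. Indeed, polarizing the identity from (3), or expanding directly,
\begin{align}
\langle f(X), f(Y)\rangle_{\R^{2d^2}} = \sum_{i,j}\textnormal{Re}\pr{\overline{\bra{a_i} X\ket{a_j}}\bra{a_i} Y\ket{a_j}} = \textnormal{Re}\tr{X^\dagger Y},
\end{align}
which is exactly the real inner product on $\calL(A)$ viewed as a real Hilbert space. Since $f$ is $\R$-linear and bijective by (1)--(2), $f(E)$ is a linear subspace of $\R^{2d^2}$ whenever $E$ is a linear subspace of $\calL(A)$. Moreover, an orthogonal projection onto a subspace $E$ is the unique linear idempotent whose range is $E$ and whose kernel is the orthogonal complement of $E$ (with respect to $\textnormal{Re}\tr{X^\dagger Y}$). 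Because $f$ is a bijective linear isometry of real inner product spaces, it conjugates orthogonal projections: $f\circ Q\circ f^{-1}$ is idempotent with range $f(E)$ and kernel orthogonal to $f(E)$, hence equals $Q'$, yielding $f(Q(X)) = Q'(f(X))$.

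For property (5), I would use that $f$ is an $\R$-linear bijection between finite-dimensional normed spaces, hence a homeomorphism (a linear bijection between finite-dimensional real vector spaces is automatically continuous with continuous inverse). Linearity immediately gives preservation of convex combinations, so $f(X)$ is convex whenever $X$ is, and continuity gives $f(X)$ compact whenever $X$ is compact. Finally, since a homeomorphism maps interior to interior and boundary to boundary, we get $\partial f(X) = f(\partial X)$. None of these steps presents a genuine obstacle; the only subtlety to watch is in property (4), where one must be careful to use the \emph{real} inner product $\textnormal{Re}\tr{X^\dagger Y}$ rather than the complex Hilbert–Schmidt inner product, since $f$ is merely $\R$-linear and not $\C$-linear.
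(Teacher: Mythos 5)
Your proposal is correct. The paper only spells out a proof of item~(4), declaring items (1), (2), (3), and (5) to be "straightforward consequences of the definition of $f$"; your arguments for those four items are exactly the standard ones and match what the paper leaves implicit. For item~(4) the paper argues via the \emph{variational} (nearest-point) characterization of orthogonal projection, writing $f(Q(X)) = f\bigl(\argmin_{Y\in E}\nrm{Y-X}_2\bigr) = \argmin_{Y'\in f(E)}\nrm{Y'-f(X)}_2 = Q'(f(X))$, using the isometry from item~(3) directly. You instead polarize item~(3) to show that $f$ preserves the real Hilbert--Schmidt inner product $\textnormal{Re}\,\tr{X^\dagger Y}$, and then use the \emph{algebraic} characterization of orthogonal projection as the idempotent with range $E$ and kernel $E^\perp$, concluding that $f\circ Q\circ f^{-1}=Q'$. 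These are two equivalent ways of exploiting the same fact (that $f$ is a real-linear isometry); neither is longer or more general than the other, and your added remark that one must work with the real inner product rather than the complex one is a sensible and accurate caution, even though the two projections coincide when $E$ is a complex-linear subspace such as $\ker\calE$.
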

\begin{proof}
We only prove item 4 and the other items are straightforward consequence of the definition of $f$. We have$f(Q(X)) = f(\argmin_{Y\in E} \nrm{Y-X}_2)  = f(\argmin_{Y\in E} \nrm{f(Y)-f(X)}_2) = \argmin_{Y' \in f(E)} \nrm{Y' - f(X)}_2 = Q'(f(X))$.
\end{proof}
 
 Proposition~\ref{prop:f-prob} implies that 
\begin{align}
\sup_{\rho  \in \calD(A) \setminus \set{\kb{0}}} \frac{\nrm{\rho - \kb{0}}_2}{\nrm{(\id_A -P)(\rho - \kb{0})}_2} < \infty
\end{align}
if and only if
\begin{align}
\sup_{x  \in f(\calD(A) -\kb{0}) \setminus \set{0}} \frac{\nrm{x}_2}{\nrm{(\one - P')(x)}_2} < \infty, \label{eq:frac-norm}
\end{align}
where $P'$ is the projection onto $f(\ker \calE)$.

\paragraph{Step 4} 
We now provide a geometric characterization for Eq.~\eqref{eq:frac-norm} (See Fig~\ref{fig:tangent}).
\begin{figure}[h]
  \centering
  \includegraphics{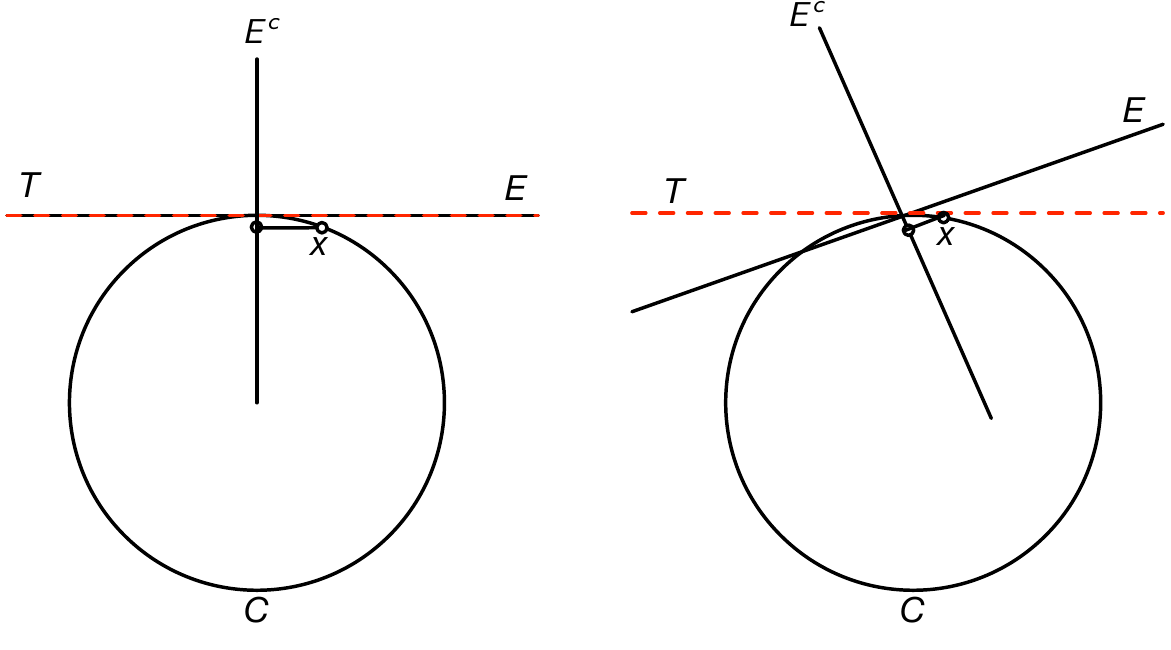}
  \caption{Illustration of Proposition~\ref{prop:bound-tang}: On the left, $\norm[2]{x}$ \emph{cannot} be uniformly bounded by $\norm[2]{(\one - P)x}$ when $x$ is close to the origin, while on the right, $\norm[2]{x}$ \emph{can} be uniformly bounded by $\norm[2]{(\one - P)x}$ when $x$ is close to the origin}
  \label{fig:tangent}
\end{figure}
\begin{proposition}
\label{prop:bound-tang}
Let $C$ be a compact  convex subset of $\R^k$ containing the origin on its boundary and $E$ be a linear subspace of $\R^k$ such that $C \cap E = \set{0}$. We assume that the boundary of $C$, $\partial C $, is a smooth manifold embedded in $\R^k$ and the tangent space of $\partial C$ at the origin is $T$. Then, upon denoting the projection onto $E$ by $P$, \begin{align}
    \sup_{x\in C\setminus\set{0}} \frac{\nrm{x}_2}{\nrm{(\one - P)x}_2} < \infty
\end{align}
if and only if
$T\cap E = \set{0}$.
\end{proposition}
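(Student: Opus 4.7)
My plan is to prove each direction by perturbing the origin suitably: along a vector in $T \cap E$ for necessity, and via a compactness argument on normalized sequences for sufficiency. For the \emph{necessity direction}, assume $T \cap E \neq \set{0}$ and pick a unit vector $v \in T\cap E$. Because $v$ is a tangent vector to the smooth manifold $\partial C$ at $0$, there is a smooth curve $\gamma:(-\epsilon,\epsilon)\to \partial C$ with $\gamma(0)=0$ and $\gamma'(0)=v$, so $\gamma(t)=tv+o(t)$. Since $v\in E$ gives $(\one-P)v=0$, we have $(\one-P)\gamma(t)=o(t)$, while $\nrm{\gamma(t)}_2 = t+o(t)$. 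Because $C$ is closed, $\partial C \subseteq C$, and $\gamma(t)\neq 0$ for small $t>0$, the ratio $\nrm{\gamma(t)}_2/\nrm{(\one-P)\gamma(t)}_2$ tends to $\infty$.

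For the \emph{sufficiency direction}, assume $T \cap E = \set{0}$ and suppose for contradiction that there exists $\set{x_n}\subset C\setminus\set{0}$ with $\nrm{x_n}_2/\nrm{(\one-P)x_n}_2\to \infty$. Compactness of $C$ lets me extract a subsequence with $x_n\to x_*\in C$. If $x_*\neq 0$, continuity forces $(\one-P)x_*=0$, placing $x_*$ in $E\cap C=\set{0}$, a contradiction. Hence $x_*=0$; setting $y_n \eqdef x_n/\nrm{x_n}_2$ and extracting again so $y_n\to y_*$ with $\nrm{y_*}_2=1$, the blowup of the ratio forces $\nrm{(\one-P)y_n}_2 \to 0$, and so $y_*\in E$. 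The crux is now to show $y_*\in T$, for then $y_*\in T\cap E=\set{0}$ will contradict $\nrm{y_*}_2=1$.

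To establish $y_*\in T$, I use both convexity and smoothness. Let $V$ be the affine hull of $C$, a linear subspace of $\R^k$ since $0\in C$. I interpret ``$\partial C$ smooth at $0$'' in the convex setting to mean that locally there exist a neighborhood $U$ of $0$ and a smooth function $g:V\to\R$ with $g(0)=0$, $dg(0)\neq 0$, $C\cap U=\set{x\in V\cap U: g(x)\geq 0}$, and $T=\ker dg(0)$. Writing $x_n = \nrm{x_n}_2 y_n$ with $\nrm{x_n}_2 \to 0^+$ shows $y_*\in V$ and that $y_*$ lies in the tangent cone of $C$ at $0$, so $dg(0)(y_*)\geq 0$. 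If this inequality were strict, then for small $s>0$, $g(sy_*)=s\cdot dg(0)(y_*)+o(s)>0$, placing $sy_*$ in $C$; since $E$ is a subspace and $y_*\in E$, also $sy_*\in E$, contradicting $C\cap E=\set{0}$. Therefore $dg(0)(y_*)=0$, i.e., $y_*\in T$, completing the proof. The main subtle point is precisely this interpretation of the smoothness hypothesis when $C$ is not full-dimensional: $\partial C$ should be read as the relative boundary inside $V$ and $T$ as a hyperplane of $V$; once the local defining function $g$ is in hand, the tangent-cone computation is routine.
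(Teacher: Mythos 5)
Your necessity direction (building a curve $\gamma$ on $\partial C$ with $\gamma'(0)=v\in T\cap E$ and letting $t\to 0$) is essentially the same as the paper's. Your sufficiency direction, however, takes a genuinely different route. The paper first shows (Step 1) that the supremum over $C\setminus\set{0}$ can be replaced by the supremum over $\partial C\setminus\set{0}$ via a line from $x$ toward $Px$ that hits the boundary, and then gives a direct quantitative estimate near the origin: writing $Q$ for the projection onto $T$, smoothness gives $\nrm{x}\leq 2\nrm{Qx}$ and $\nrm{(\one-Q)x}\leq \tfrac{B}{2}\nrm{Qx}$ for $x\in\partial C$ near $0$, while injectivity of $(\one-P)|_T$ gives $\nrm{(\one-P)y}\geq B\nrm{y}$ on $T$; combining these yields an explicit uniform bound $4/B$ near $0$, and compactness handles $\partial C$ away from $0$. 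You instead argue by contradiction with a normalized sequence: a blowing-up sequence $x_n$ must satisfy $x_n\to 0$, the directions $y_n=x_n/\nrm{x_n}$ subconverge to some unit $y_*\in E$, and then a tangent-cone computation (using a local defining function $g$ for $\partial C$ in the affine hull $V$ and the sign rigidity forced by $C\cap E=\set{0}$) places $y_*\in T$, hence $y_*\in T\cap E=\set{0}$, a contradiction. This is correct: $y_*\in T_C(0)$ gives $dg(0)(y_*)\geq 0$, and strict positivity would put $sy_*\in C\cap E$ for small $s>0$. Your approach is softer (it yields no explicit constant, whereas the paper's does), but it is also tighter in structure: it bypasses the boundary-reduction step entirely because the tangent-cone argument applies directly to $C$, and you make explicit the relative-boundary / affine-hull reading of the smoothness hypothesis, which the paper leaves implicit but which is needed in the actual application where $C=f(\calD(A)-\kb{0})$ is lower-dimensional in $\R^{2d^2}$.
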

\begin{proof}
See Appendix~\ref{sec:bound-tang}.
\end{proof}

\paragraph{Step 5} We show here that the tangent space at the origin of the boundary of $f(\calD(A) -\kb{0})$ is $\textnormal{span}(a_1, \cdots, a_{2d - 2})$, where $\set{a_i}_{i\in \intseq{1}{2d-2}}$ is defined in Definition~\ref{def:f}. First, note that the boundary of $f(\calD(A) \setminus \kb{0})$ is $f(\set{\kb{\phi}: \norm{\phi} = 1} \setminus \kb{0})$ because the boundary of $\calD\pr{A}$ are pure states and because of item 5 of Proposition~\ref{prop:f-prob}.  We define two maps
\begin{align}
    g: A&\to \calL(A)\\
    \ket{\phi}&\mapsto \kb{\phi}
\end{align}
and
\begin{align}
    h:\R^{2d-1}&\to A\\
    (x_1, x_2, y_2, x_3, y_3, \cdots, x_d, y_d) &\mapsto x_1 \ket{e_1} + \sum_{j=2}^d (x_j + i y_j)\ket{e_j}. 
\end{align}
Note that the coefficient of $\ket{e_1}$ is always real for all vectors in the range of $h$ as we have  freedom to choose the phase of a quantum state. Following our definition of $f$, $g$, and $h$, the $2((j-1)\times d + k) -1$ and $2((j-1)\times d + k)$ components of $(f\circ g\circ h)(x_1, x_2, y_2, x_3, y_3, \cdots, x_d, y_d)$ are $x_jx_k-y_jy_l$ and $y_jx_k + x_jy_k$, respectively. Thus, $f\circ g\circ h$ is a smooth function. We  also calculate the derivative of  $f\circ g \circ h$ at $(1, 0, \cdots, 0)$, which is represented by the matrix $[a_0|a_1| \cdots|a_{2d-2}]$, where $a_0$ is a vector and $a_1, \cdots, a_{2d-2}$ are as in Definition~\ref{def:f}.

Let $\calS \eqdef \set{x\in \R^{2d-1}: \norm[2]{x} = 1}$ be the unit sphere in $\R^{2d-1}$. The restriction of $f\circ g\circ h$ is also a smooth function.  The tangent space of $\calS$ at $(1, 0, \cdots 0)$ is the span of $(b_2, \cdots, b_{2d-1})$, where $b_1, \cdots, b_{2d-1}$ form the standard basis for $\R^{2d-1}$. Therefore, the image of the derivative of $f\circ g \circ h$ restricted to the tangent space of $\calS$ at $(1, 0, \cdots, 0)$ is the space of $(a_1, \cdots, a_{2d-2})$. Since $a_1, \cdots, a_{2d-2}$ are linearly independent and the dimension of $f(\set{\kb{\phi}: \norm{\phi} = 1} \setminus \kb{0})$ is $2d-2$, the whole tangent space of $f(\set{\kb{\phi}: \norm{\phi} = 1} \setminus \kb{0})$ at the origin should be the span of $a_1, \cdots, a_{2d-2}$.

\subsection{Proof of Proposition~\ref{prop:bound-operator}}
\label{sec:bound-operator}
Let $V/\ker V$ be the quotient space and $\pi: V \to V/\ker A$ be the quotient map. We can define a norm on $V/\ker A$ by $\nrm{\pi(v)} = \inf_{x \in \ker A} \nrm{v - x} = \nrm{v - Pv} = \nrm{(\one_V - P) v}$. By the first isomorphism theorem of linear algebra, there exists a linear isomorphism $\widetilde{A}: V/\ker A \to W$ such that $Av = (\widetilde{A}\circ \pi) v$. Since any linear operator from a finite dimensional space is bounded, we have
\begin{align}
    \nrm{Av} = \nrm{\widetilde{A}(\pi (v))} \leq \nrm{\widetilde{A}} \nrm{\pi(v)} =  \nrm{\widetilde{A}} \nrm{(\one_V - P) v}
\end{align}
and 
\begin{align}
   \nrm{(\one_V - P) v} = \nrm{\pi(v)} = \nrm{\widetilde{A}^{-1}(\widetilde{A}(\pi(v)))} \leq \nrm{\widetilde{A}^{-1}} \nrm{\widetilde{A}(\pi(v))} = \nrm{\widetilde{A}^{-1}} \nrm{Av}.
\end{align}
The result therefore holds  for $B_1 = 1/\nrm{\widetilde{A}}$ and $B_2 = \nrm{\widetilde{A}^{-1}}$.

\subsection{Proof of Proposition~\ref{prop:bound-tang}}
\label{sec:bound-tang}
\paragraph{Step 1} We first show that 
\begin{align}
        \sup_{x\in C\setminus\set{0}} \frac{\nrm{x}_2}{\nrm{(\one - P)x}_2} < \infty
\end{align}
if and only if
\begin{align}
    \sup_{x\in \partial C\setminus\set{0}} \frac{\nrm{x}_2}{\nrm{(\one - P)x}_2} < \infty,
\end{align}
If $Px = 0$ for $x\in C\setminus\set{0}$, we have $\frac{\nrm{x}_2}{\nrm{(\one - P)x}_2 } = 1$, which is bounded.
Let $x\in C\setminus \set{0}$ such that $Px \neq 0$ and define $\phi: [0, 1] \to \R^k$ by $\phi(t) \eqdef (1-t) x + t Px$. We know that $\phi^{-1}(C)$ is closed and connected, because $C$ is closed and connected (as a convex set) and $\phi$ is continuous. The only closed and connected subsets of $[0, 1]$ are of closed intervals and as $\phi(0) = x \in C$, we have $\phi^{-1}(C) = [0, a]$ for some $a \in [0, 1]$. Since $Px \neq 0$ and $E \cap C = \set{0}$, we have $Px \notin C$ and therefore $a < 1$. $\phi(a)$ is on the boundary of $C$ because $a$ is on the boundary of $\phi^{-1}(C) = [0, a]$ and $\phi$ is continuous. Now note that
\begin{align}
    P \phi(a) = P((1-a) x + a Px) = (1-a) Px + aP^2x = Px.
\end{align}
Hence,
\begin{align}
    \nrm{(1-P)\phi(a)}_2= \nrm{(1-a)x + a Px - Px }_2 = (1-a) \nrm{(1-P)x}_2 \leq \nrm{(1-P)x}_2
\end{align}
Therefore,
\begin{align}
    \frac{\nrm{x}_2}{\nrm{(\one - P)x}_2} 
    &= \frac{\sqrt{\nrm{(1-P)x}_2^2 + \nrm{Px}_2^2}}{\nrm{(\one - P)x}_2} \\
    &\stackrel{(a)}{\leq} \frac{\sqrt{\nrm{(1-P)\phi(a)}_2^2 + \nrm{P\phi(a)}_2^2}}{\nrm{(\one - P)\phi(a)}_2}\\
    &= \frac{\nrm{\phi(a)}_2}{\nrm{(\one - P)\phi(a)}_2}.
\end{align}
This completes the proof of the first step.
\paragraph{Step 2} We now show that 
\begin{align}
    \label{eq:bounded-boundry}
    \sup_{x\in \partial C\setminus\set{0}} \frac{\nrm{x}_2}{\nrm{(\one - P)x}_2} < \infty,
\end{align}
if and only if $E\cap T =\set{0}$. 

First suppose that $v\in E\cap T$ is non-zero. We will find $x\in \partial C \setminus \set{0}$ such that $\nrm{x}/\nrm{(1-P)x} \geq K$ for a given $K>0$. By definition of tangent space, there exists a smooth curve $\gamma: (-1, 1) \to \partial C$ such that $\gamma(0) = 0$ and $\gamma'(0) = v$, i.e., $\lim_{t\to 0} \gamma(t) / t = v$. There exists some $t_0 > 0$ such that $\nrm{\gamma(t)/t - v} \leq \nrm{v}/2$ for all $0 < |t| < t_0$. We therefore have $\nrm{\gamma(t)} \geq t \nrm{v}/2$. Additionally, $\lim_{t\to 0} (\one-P)\gamma t = 0$ because $\one - P$ is continuous and $(\one - P) v =  v - Pv = 0$. Thus, there exists $t_1 > 0$ such that $\nrm{(\one - P)\gamma(t)} \leq 2/(K\nrm{v})$. For any $t$ such that $0 < t < \min(t_0, t_1)$, we have
\begin{align}
    \frac{\nrm{\gamma(t)}}{\nrm{(1-P) \gamma(t)}} \geq \frac{t\nrm{v}/2}{2/(K\nrm{v}) } = K,
\end{align}
as claimed.

We now prove the other direction. Let $\calB(\epsilon) \eqdef \set{x \in \R^k: \nrm{x} < \epsilon}$ denote the open ball of radius $\epsilon$ at the origin.  To show  \eqref{eq:bounded-boundry}, it is enough to check for arbitrary small $\epsilon > 0$
\begin{align}
    \sup_{x\in (\partial C \cap \calB(\epsilon)) \setminus\set{0}} \frac{\nrm{x}_2}{\nrm{(\one - P)x}_2} < \infty
\end{align}
because $\partial C \setminus \calB(\epsilon)$ is a compact set, on which the distance from $E$ is non-zero and varies continuously. Let $Q$ denote the projection onto $T$ the tangent space of $\partial C$ at the origin. We know that for a point $x$ on $\partial C$ close to origin we have $\nrm{x} = \nrm{Q x} + o(\nrm{x})$. We can hence find an $\epsilon >0$ such that for $x \in\partial C \cap \calB(\epsilon)) \setminus\set{0} $ we have $\nrm{x} \leq 2\nrm{Qx}$. Furthermore, considering the linear map $A: T \to E^\perp$ defined by $x\mapsto (1-P)x $, it is injective. Therefore, for some constant $B > 0$, we have $\nrm{Ax} = \nrm{(1-P) x } \geq B \nrm{x}$ for all $x\in T$. We can also choose $\epsilon > 0$ such that $\nrm{(1-Q)x} \leq B/2\nrm{Qx}$.  Therefore, we have for all $x\in\partial (C \cap \calB(\epsilon)) \setminus\set{0} $
\begin{align}
    \frac{\nrm{x}}{\nrm{(\one-P) x}} 
    &\leq \frac{2\nrm{Qx}}{\nrm{(\one - P)x}}\\
    &\leq   \frac{2\nrm{Qx}}{\nrm{(\one - P)Qx} - \nrm{(\one - P)(\one-Q)x}}\\
    &\leq \frac{2\nrm{Qx}}{\nrm{(\one - P)Qx} - \nrm{(\one - P)(\one-Q)x}}\\
    &\leq \frac{2\nrm{Qx}}{C\nrm{Qx} - C\nrm{Qx}/2}\\
    &\leq 4/C.
\end{align}
\bibliographystyle{IEEEtran}
\bibliography{covert_est}
\end{document}